\documentclass[11pt]{article}

\usepackage{fullpage}
\usepackage{amsmath}
\usepackage{amsthm}
\usepackage{amssymb}
\usepackage{amsfonts}
\usepackage{hyperref}
\usepackage{color}
\usepackage{xcolor}

\usepackage{graphicx} 
\graphicspath{ {./images/} }
\usepackage[rightcaption]{sidecap}
\usepackage{wrapfig}

\usepackage{enumitem}
\setlist{itemsep=-3pt}

\bibliographystyle{plainurl}

\newcommand{\ignore}[1]{}

\newtheorem{theorem}{Theorem}
\newtheorem{conjecture}{Conjecture}

\newtheorem{lemma}{Lemma}
\newtheorem{fact}{Fact}
\newtheorem{definition}{Definition}
\newtheorem{claim}{Claim}

\renewcommand{\Pr}{{\bf Pr}}
\newcommand{\E}{{\bf E}}

\newcommand{\cD}{{\cal D}}
\newcommand{\cU}{{\cal U}}

\newcommand{\cA}{{\cal A}}
\newcommand{\cB}{{\cal B}}
\newcommand{\cM}{{\cal M}}

\newcommand{\cS}{{\cal S}}

\newcommand{\mwidth}{{\rm {mwidth}}}
\newcommand{\width}{{\rm {width}}}

\newcommand{\dist}{{\rm {dist}}}
\newcommand{\bj}{{\boldsymbol{j}}}
\newcommand{\bx}{\boldsymbol{x}}
\newcommand{\by}{\boldsymbol{y}}

\newcommand{\bv}{\boldsymbol{v}}
\newcommand{\bw}{\boldsymbol{w}}
\newcommand{\bz}{\boldsymbol{z}}
\newcommand{\bxi}{\boldsymbol{\xi}}
\newcommand{\dE}[1]{\underset{#1}{\mathbb{E}}}

\newcommand{\wt}{{\rm wt}}

\newcommand{\one}{{\rm one}}

\renewcommand{\exp}{{\rm exp}}
\newcommand{\opt}{{\rm opt}}
\renewcommand{\dist}{{\rm dist}}

\newcommand{\xor}{\oplus}

\newcommand{\Maj}{{\textsc {Majority}}}
\newcommand{\Yes}{{\textsc {Yes}\ }}
\newcommand{\No}{{\textsc {No}\ }}



\newcommand{\correct}{}

\newcommand{\SC}{\textsc{Set-Cover\ }}

\newcommand{\Prr}[1]{\underset{\ #1}{\Pr}}

\begin{document}

\title{Superpolynomial Lower Bounds for Learning Monotone Classes}
\author{{\bf Nader H. Bshouty}\\ Dept. of Computer Science, Technion, Haifa, Israel.\footnote{
Center for Theoretical Sciences, Guangdong Technion, (GTIIT), China. }\\ bshouty@cs.technion.ac.il}

\maketitle
\begin{abstract}
\correct
Koch, Strassle, and Tan [SODA 2023], show that, under the randomized exponential time hypothesis, there is no distribution-free PAC-learning algorithm that runs in time $n^{\tilde O(\log\log s)}$ for the classes of $n$-variable size-$s$ DNF, size-$s$ Decision Tree, and $\log s$-Junta by DNF (that returns a DNF hypothesis). Assuming a natural conjecture on the hardness of set cover, they give the lower bound $n^{\Omega(\log s)}$. This matches the best known upper bound for $n$-variable size-$s$ Decision Tree, and $\log s$-Junta.

In this paper, we give the same lower bounds for PAC-learning of $n$-variable size-$s$ Monotone DNF, size-$s$ Monotone Decision Tree, and Monotone $\log s$-Junta by~DNF. This solves the open problem proposed by Koch, Strassle, and Tan and subsumes the above results. 

The lower bound holds, even if the learner knows the distribution, can draw a sample according to the distribution in polynomial time, and can compute the target function on all the points of the support of the distribution in polynomial time.
\end{abstract}
\section{Introduction}
\correct
In the distribution-free PAC learning model~\cite{Valiant84}, the learning algorithm of a class of functions $C$ has access to an unknown target function $f\in C$ through labeled examples $(x,f(x))$ where $x$ are drawn according to an unknown but fixed probability distribution $\cD$. For a class of hypothesis $H\supseteq C$, we say that the learning algorithm $\cA$ {\it PAC-learns $C$ by $H$} in time $T$ and error $\epsilon$ if for every target $f\in C$ and distribution $\cD$, $\cA$ runs in time $T$ and outputs a hypothesis $h\in H$ which, with probability at least $2/3$, is $\epsilon$-close to $f$ with respect to $\cD$. That is, satisfies $\Pr_{\bx\sim\cD}[f(\bx)\not= h(\bx)]\le \epsilon$. 

Koch et al.,~\cite{KochST}, show that, under the randomized exponential time hypothesis (ETH), there is no PAC-learning algorithm that runs in time $n^{\tilde O(\log\log s)}$ for the classes of $n$-variable size-$s$ DNF, size-$s$ Decision Tree and $\log s$-Junta by DNF. Assuming a natural conjecture on the hardness of set cover, they give the lower bound $n^{\Omega(\log s)}$. Their lower bound holds, even if the learner knows the distribution, can draw a sample according to the distribution in polynomial time and can compute the target function on all the points of the support of the distribution in polynomial time. 

In this paper, we give the same lower bounds for PAC-learning of the classes $n$-variable size-$s$ Monotone DNF, size-$s$ Monotone Decision Tree and Monotone $\log s$-Junta by DNF. This solves the open problem proposed by Koch, Strassle, and Tan~\cite{KochST}.

\subsection{Our Results}
In this paper, we prove the following three Theorems. 

\noindent
{\bf Theorem~\ref{Th2}.} {\em Assuming randomized ETH, there is a constant $c$ such that any PAC learning algorithm for $n$-variable \textsc{Monotone} $(\log s)$-\textsc{Junta}, \mbox{\textsc{{\it size}-$s$ Monotone DT}  {\it and size}-$s$ \textsc{Monotone DNF}} by {\textsc {DNF}} with $\epsilon=1/(16n)$ must take at least 
$$n^{c\frac{\log\log s}{\log\log\log s}}$$ time.}

\noindent
{\bf Theorem~\ref{TTh2}.}
{\em Assuming a plausible conjecture on the hardness of \SC\footnote{See Conjecture~\ref{conj1}.}, there is a constant $c$ such that any PAC learning algorithm for $n$-variable \textsc{Monotone} $(\log s)$-\textsc{Junta}, {\textsc{{\it size}-$s$ Monotone DT}  {\it and size}-$s$ \textsc{Monotone DNF}} by {\textsc {DNF}} with $\epsilon=1/(16n)$ must take at least 
$$n^{c\log s}$$ time.}

\noindent
{\bf Theorem~\ref{THEND}.}
{\em Assuming randomized ETH, there is a constant $c$ such that any PAC learning algorithm for $n$-variable \textsc{Monotone} $(\log s)$-\textsc{Junta}, \mbox{\textsc{{\it size}-$s$ Monotone DT}  {\it and size}-$s$ \textsc{Monotone DNF}} by {\it size}-$s$ {\textsc {DNF}} with $\epsilon=1/(16n)$ must take at least 
$$n^{c\log s}$$ time.}

All the above lower bound holds, even if the learner knows the distribution, can draw a sample according to the distribution in polynomial time and can compute the target on all the points of the support of the distribution in polynomial time.

In the following two subsections, we give the technique used in~\cite{KochST} to prove Theorem~\ref{Th2} for 
\label{PrevT} $(\log s)$-\textsc{Junta}, and the technique we use here to extend the result to \textsc{Monotone} $(\log s)$-\textsc{Junta}. 
\correct
\subsection{Previous Technique}
In~\cite{KochST}, Koch, Strassle, and Tan show that under the randomized exponential time hypothesis, there is no PAC-learning algorithm that runs in time $n^{\tilde O(\log\log n)}$ for the class of $\log n$-Junta\footnote{$k$-Junta are Boolean functions that depend on at most $k$ variables} by DNF. The results for the other classes follow immediately from this result, since all other classes contain $\log n$-Junta. All prior works \cite{AlekhnovichBFKP08,HancockJLT96} ruled out only $poly(n)$ time algorithms.

The result in~\cite{KochST} uses the hardness result of $(k,k')$-\SC where one needs to distinguish between instances that have set cover of size at most $k$ from instances that have minimum-size set cover greater than $k'$: 
\begin{enumerate}
    \item\label{firsti} For some parameters $k$ and $k'$, assuming randomized ETH, there is a constant $\lambda<1$ such that $(k,k')$-\SC cannot be solved in time $n^{\lambda k}$. 
\end{enumerate}

First, for each set cover instance $\cS$, they identify each element in the universe with an assignment in $\{0,1\}^n$ and construct in polynomial time a target function $\Gamma^\cS:\{0,1\}^n\to \{0,1\}$ and a distribution $\cD^\cS$ that satisfies: 
\begin{enumerate}
\setcounter{enumi}{1}
\item\label{bbb} The instance $\cS$ has minimum-size set cover $\opt(\cS)$ if and only if the function $\Gamma^\cS$ is a conjunction of $\opt(\cS)$ unnegated variables\footnote{Their reduction gives a conjunction of negated variable. So here, we are referring to the dual function.} over  the distribution $\cD^\cS$.\footnote{That is, there is a term $T$ with $\opt(\cS)$ variables such that for every $x$ in the support of $\cD^\cS$, $\Gamma^\cS(x)=T(x)$.}
\end{enumerate}

For a DNF $F$ and $x\in \{0,1\}^n$, they define $\width_F(x)$ to be the size of the smallest term $T$ in $F$ that satisfies $T(x)=1$. They then show that
\begin{enumerate}
\setcounter{enumi}{2}
\item\label{II2} Any DNF $F$ with expected width $\E_{\bx\sim \cD^\cS}[\width_F(\bx)]\le \opt(\cS)/2$ is $(1/(2N))$-far from $\Gamma^\cS$ with respect to $\cD^\cS$ where $N$ is the size\footnote{$N$ is the number of sets plus the size of the universe in $\cS$.} of $\cS$. That is, $\Pr_{\bx\sim \cD^\cS}[F(\bx)\not=\Gamma^\cS(\bx)]\ge 1/(2N)$.
\end{enumerate}

They then use the following gap amplification technique. They define the function $\Gamma^\cS_{\xor\ell}:(\{0,1\}^\ell)^n\to \{0,1\}$ where for $y=(y_1,\ldots,y_n)$, $y_i=(y_{i,1},\ldots,y_{i,\ell})\in\{0,1\}^\ell$, $i\in [n]$, we have $\Gamma^\cS_{\xor\ell}(y)=\Gamma^\cS(\xor y_1,\ldots,\xor y_n)$ and $\xor y_i=y_{i,1}+\cdots+y_{i,\ell}$. They also extend the distribution $\cD^\cS$ to a distribution $\cD^\cS_{\xor\ell}$ over domain $(\{0,1\}^\ell)^n$ 
and prove that 
\begin{enumerate}
\setcounter{enumi}{3}
\item\label{It1} $\Gamma^\cS_{\xor\ell}(y)$ is a $(\opt(\cS)\ell)$-Junta over $\cD^\cS_{\xor\ell}$.
\item\label{It2} Any DNF formula $F$ with expected depth $\E_{\by\sim \cD^\cS_{\xor\ell}}[\width_F(\by)]\le \opt(\cS)\ell/4$ is $(1/(4N))$-far from $\Gamma^\cS_{\xor\ell}$ with respect to $\cD_{\xor \ell}^\cS$. 
\end{enumerate}
Item~\ref{It1} follows from the definition of $\Gamma^\cS_{\xor \ell}$ and item~\ref{bbb}. To prove Item~\ref{It2}, they show that if, to the contrary, there is a DNF $F$ of expected width at most $\opt(\cS)\ell/4$ that is $1/(4N)$-close to $\Gamma^\cS_{\xor \ell}$ with respect to $\cD_{\xor\ell}^\cS$, then there is $j\in [\ell]$ and a projection of all the variables that are not of the form $y_{i,j}$ that gives a DNF $F^*$ of expected width at most $\opt(\cS)/2$ that is $1/(2N)$-close to $\Gamma^\cS$ with respect to $\cD^\cS$. Then, by item~\ref{II2}, we get a contradiction.    

They then show that 
\begin{enumerate}
\setcounter{enumi}{5}
    \item\label{II5} Any size-$s$ DNF that is $(1/(4N))$-close to $\Gamma^\cS_{\xor \ell}$ with respect to $\cD^\cS_{\xor \ell}$ has average width $\E_{\by\sim \cD^\cS_{\xor\ell}}$ $[\width_F(\by)]\le 4\log s$.
\end{enumerate}
If $F$ is $(1/(4N))$-close to $\Gamma^\cS_{\xor\ell}$ with respect to $\cD^\cS_{\xor \ell}$, then, by items~\ref{It2} and~\ref{II5}, $4\log s\ge \E_{\by\sim \cD^\cS_{\xor\ell}}$ $[\width_F(\by)]\ge \opt(\cS)\ell/4$ and then $s\ge 2^{\opt(\cS)\ell/16}$. Therefore,
\begin{enumerate}
\setcounter{enumi}{6}
    \item\label{Fii} Any DNF of size 
less than $2^{\opt(\cS)\ell/16}$ is $(1/(4N))$-far from $\Gamma^\cS_{\xor\ell}$ with respect to $\cD_{\xor\ell}^\cS$.
\end{enumerate}

Now, let $k=\tilde O(\log\log n)$. Suppose, to the contrary, that there is a PAC-learning algorithm for $\log n$-Junta by DNF with error $\epsilon=1/(8N)$ that runs in time $t=n^{\lambda k/2}=n^{\tilde O(\log\log n)}$, where $\lambda$ is the constant in item~\ref{firsti}. Given a $(k,k')$-\SC instance, we run the learning algorithm for $\Gamma^\cS_{\xor \ell}$ for $\ell=\log n/k$. If the instance has set cover at most $k$, then by item~\ref{It1}, $\Gamma^\cS_{\xor \ell}$ is $\log n$-Junta. Then the algorithm learns the target and outputs a hypothesis that is $(1/(8N))$-close to $\Gamma_{\xor \ell}^\cS$ with respect to~$\cD^\cS_{\xor\ell}$.

On the other hand, if the instance has a minimum-size set cover of at least $k'$, then any learning algorithm that runs in time $t=n^{\lambda k/2}=n^{\tilde O(\log\log n)}$ cannot output a DNF of size more than $t$ terms. By item~\ref{Fii}, any DNF of size 
less than $2^{k'\log n/(16k)}\le 2^{\opt(\cS)\ell/16}$ is $(1/(4N))$-far from $\Gamma^\cS_{\xor\ell}$ with respect to $\cD^\cS_{\xor \ell}$. By choosing the right parameters  $k$ and $k'$, we have $2^{k'\log n/(16k)}>t$, and therefore, any DNF that the algorithm outputs has error of at least $1/(4N)$. 

Therefore, by estimating the distance of the output of the learning algorithm from $\Gamma_{\xor \ell}^\cS$ with respect to $\cD_{\xor\ell}^\cS$, we can distinguish between instances that have set cover of size less than or equal to $k$ from instances that have a minimum-size set cover greater than $k'$ in time $t=n^{\lambda k/2}$. Thus, we got an algorithm for $(k,k')$-\SC that runs in time $n^{\lambda k/2}<n^{\lambda k}$. This contradicts item~\ref{firsti} and finishes the proof of the first lower bound. 

Assuming a natural conjecture on the hardness of set cover, they give the lower bound $n^{\Omega(\log s)}$. We will discuss this in Section~\ref{LLast}. 

\subsection{Our Technique}
\correct

In this paper, we also use the hardness result of $(k,k')$-\SC. As in~\cite{KochST}, we identify each element in the universe with an assignment in $\{0,1\}^n$ and use the function $\Gamma^\cS$ and the distribution $\cD^\cS$ that satisfies: 
\begin{enumerate}
\item\label{ldl} The instance $\cS$ has minimum-size set cover $\opt(\cS)$ if and only if the function $\Gamma^\cS$ is a conjunction of $\opt(\cS)$ variables over  the distribution $\cD^\cS$.
\end{enumerate} 

We then build a monotone target function $\Gamma^\cS_\ell$ and use a different approach to show that any DNF of size
less than $2^{\opt(\cS)\ell/20}$ is $(1/(8N)-2^{-\opt(\cS)\ell/20})$-far from $\Gamma^\cS_{\ell}$ with respect to $\cD^\cS_\ell$. 

We define, for any odd $\ell$, the monotone function $\Gamma^\cS_\ell:(\{0,1\}^\ell)^n\to \{0,1\}$ where for $y=(y_1,\ldots,y_n)$, $y_i=(y_{i,1},\ldots,y_{i,\ell})$ we have $\Gamma^\cS_{\ell}(y)=\Gamma^\cS(\Maj (y_1),\ldots,\Maj (y_n))$ where $\Maj$ is the majority function. A distribution $\cD^\cS_\ell$ is also defined such that
\begin{enumerate}
\setcounter{enumi}{1}
\item\label{halfl} $\Pr_{\by\sim\cD_\ell^\cS}[\Gamma_\ell^\cS(\by)=0]=\Pr_{\by\sim\cD_\ell^\cS}[\Gamma_\ell^\cS(\by)=1]=1/2$.
\end{enumerate} 
It is clear from the definition of $\Gamma^\cS_\ell$ and item~\ref{ldl} that
\begin{enumerate}
\setcounter{enumi}{2}
\item $\Gamma^\cS_{\ell}(y)$ is a monotone $(\opt(\cS)\ell)$-Junta over $\cD^\cS_{\ell}$.
\end{enumerate}
We then define the {\it monotone size} of a term $T$ to be the number of unnegated variables that appear in $T$. We first show that  
\begin{enumerate}
\setcounter{enumi}{3}
    \item\label{jh} For every DNF $F:(\{0,1\}^\ell)^n\to \{0,1\}$ of size $|F|\le 2^{\opt(\cS)\ell/5}$ that is $\epsilon$-far from $\Gamma^\cS_\ell$ with respect to $\cD_\ell^\cS$, there is another DNF $F'$ of size  $|F'|\le 2^{\opt(\cS)\ell/5}$ with terms of monotone size at most $\opt(\cS)\ell/5$ that is $(\epsilon-2^{-\opt(\cS)\ell/20})$-far from $\Gamma_\ell^\cS$ with respect to $\cD_\ell^\cS$.
\end{enumerate}
This is done by simply showing that terms of large monotone size in the DNF $F$ have a small weight according to the distribution $\cD_\ell^\cS$ and, therefore, can be removed from $F$ with the cost of $-2^{-\opt(\cS)\ell/20}$ in the error. 

We then, roughly speaking, show that 
\begin{enumerate}
\setcounter{enumi}{4}
\item\label{klk} Let $F'$ be a DNF of size $|F'|\le 2^{\opt(\cS)\ell/5}$ with terms of monotone size at most $\opt(\cS)\ell/5$. For every $y\in (\{0,1\}^\ell)^n$ in the support of $\cD_\ell^\cS$ that satisfies $\Gamma_\ell^\cS(y)=1$, either \begin{itemize}
    \item $F'(y)=0$ or
    \item $F'(y)=1$, and at least $1/(2N)$ fraction of the points $z$ below $y$ in the lattice $(\{0,1\}^\ell)^n$ that are in the support of $\cD_\ell^\cS$ satisfies $F'(z)=1$ and $\Gamma_\ell^\cS(z)=0$. 
\end{itemize} 
\end{enumerate}
By item~\ref{klk}, either $1/(4N)$ fraction of the vectors $y$ that satisfy $\Gamma_\ell^\cS(y)=1$ satisfy $F'(y)=0$ or $(1-1/(4N))/(2N)>1/(4N)$ fraction of the points $z$ that satisfy $\Gamma_\ell^\cS(z)=0$ satisfy $F'(z)=1$. Therefore, with item~\ref{halfl}, we get that $F'$ is $1/(8N)$-far from $\Gamma^\cS_\ell$ with respect to $\cD_\ell^\cS$. This, with item~\ref{jh}, implies that
\begin{enumerate}
\setcounter{enumi}{5}
    \item If $F:(\{0,1\}^\ell)^n\to \{0,1\}$ is a DNF of size $|F|<2^{\opt(\cS)\ell/20}$, then $F$ is  $(1/(8N)-2^{-\opt(\cS)\ell/20})$-far from $\Gamma^\cS_\ell$ with respect to $\cD^\cS_\ell$.
\end{enumerate}
The rest of the proof is almost the same as in~\cite{KochST}. See the discussion in subsection~\ref{PrevT} after item~\ref{Fii}.
\subsection{Upper Bounds} 
\correct

The only known distribution-free algorithm for $\log s$-Junta is the trivial algorithm that, for every set of $m=\log s$ variables $S=\{x_{i_1},\ldots,x_{i_m}\}$, checks if there is a function that depends on $S$ and is consistent with the examples. This algorithm takes $n^{O(\log s)}$ time. 

For size-$s$ decision tree and monotone size-$s$ decision tree, the classic result of Ehrenfeucht and Haussler~\cite{EhrenfeuchtH89} gives a distribution-free time algorithm that runs in time $n^{O(\log s)}$ and outputs a decision tree of size $n^{O(\log s)}$. 

The learning algorithm is as follows: Let $T$ be the target decision tree of size $s$. First, the algorithm guesses the variable at the root of the tree $T$ and then guesses which subtree of the root has size at most $s/2$. Then, it recursively constructs the tree of size $s/2$. When it succeeds, it continues to construct the other subtree. 

For size-$s$ DNF and monotone size-$s$ DNF, Hellerstein et al.~\cite{HellersteinKSS12} gave a distribution-free proper learning algorithm that runs in time $2^{\tilde O(\sqrt{n})}$.  

To the best of our knowledge, all the other results in the literature for learning the above classes are either restricted to the uniform distribution or use, in addition, a black box queries or returns hypotheses that are not DNF. 

\section{Definitions and Preliminaries} 
In this section, we give the definitions and preliminary results that are needed to prove our results. 
\subsection{Set Cover}
\correct

Let $\cS=(S,U,E)$ be a bipartite graph on $N=n+|U|$ vertices where $S=[n]$, and for every $u\in U$, $\deg(u)>0$. We say that $C\subseteq S$ is a set cover of $\cS$ if every vertex in $U$ is adjacent to some vertex in $C$. The \SC problem is to find a minimum-size set cover. We denote by $\opt(\cS)$ the size of a minimum-size set cover for $\cS$. 

We identify each element $u\in U$ with the vector $(u_1,\ldots,u_n)\in\{0,1\}^n$ where $u_i=0$ if and only if $(i,u)\in E$. We will assume that those vectors are distinct. If there are two distinct elements $u,u'\in U$ that have the same vector, then you can remove one of them from the graph. This is because every set cover that covers one of them covers the other. 

\begin{definition}
The $(k,k')$-\SC problem is the following: Given as input a set cover instance $\cS=(S,U,E)$, and parameters $k$ and $k'$. Output \Yes if $\opt(\cS)\le k$ and \No if $\opt(\cS)>k'$.
\end{definition}

\subsection{Hardness of \SC}
\correct

Our results are conditioned on the following randomized exponential time hypothesis (ETH)

\noindent
{\bf Hypothesis:}~\cite{CalabroIKP08,DellHMTW14,ImpagliazzoP01,ImpagliazzoPZ01,Tovey84}. There exists a constant $c\in (0,1)$ such that $3$-SAT on $n$ variables cannot be solved by a randomized algorithm in $O(2^{cn})$ time with success probability at least $2/3$.

The following is proved in ~\cite{Lin19}. See also Theorem~7 in \cite{KochST}
\begin{lemma}\label{Lin}~\cite{Lin19}. Let $k\le \frac{1}{2}\frac{\log\log N}{\log\log\log N}$ and $k'=\frac{1}{2}\left(\frac{\log N}{\log\log N}\right)^{1/k}$ be two integers. Assuming randomized ETH, there is a constant $\lambda\in(0,1)$ such that there is no randomized $N^{\lambda k}$ time algorithm that can solve $\left(k,k'\right)$-\SC on $N$ vertices with high probability. 
\end{lemma}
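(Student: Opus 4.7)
The plan is to derive this lemma via a combinatorial gap-amplification for parameterized \SC, starting from an ETH-hard base problem. The natural base is $k$-\textsc{Clique}, which under randomized ETH requires $n^{\Omega(k)}$ time via the standard reduction from $3$-SAT that partitions the variables into $k$ blocks and encodes satisfying partial assignments as color classes. Alternatively one can begin directly from $3$-SAT and pass through a sparsification plus a $k$-CSP intermediate to arrive at the same starting point.

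First I would construct a base reduction producing a modest-gap instance of \SC. Given a $k$-\textsc{Clique} instance on $n$ vertices I build a set-cover instance $\cS_0$ on $N_0 = \mathrm{poly}(n)$ vertices in which $\opt(\cS_0) \le k$ when a $k$-clique exists and $\opt(\cS_0) \ge (1+\delta)k$ when no clique exists, for some absolute $\delta>0$. The universe encodes pairs of vertices that must be ``distinguished'' and the sets correspond to vertex choices; edges of the clique graph enforce compatibility. Such a reduction is insufficient by itself but serves as the atom for amplification.

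Next I would apply a product/composition operation iteratively, in the spirit of Lin's reduction. Given an instance with gap $(k,\alpha k)$, there is a construction producing a new instance on a universe of size roughly $N_0^{t}$ with new gap $(k, g(\alpha,t)\cdot k)$, where $g(\alpha,t)$ grows like $\alpha^{t}$ up to polynomial corrections. The \emph{completeness} direction is easy: an optimal cover in $\cS_0$ lifts to a cover of the same size $k$ in the composed instance by reusing the same sets across coordinates. The \emph{soundness} direction is the heart of the argument: one must argue that any small cover of the composed instance projects to covers on many coordinates of $\cS_0$ and then a counting/averaging step forces some coordinate to admit a cover of size below $\alpha k$, contradicting the base gap. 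I expect this soundness argument to be the main obstacle, since it has to be carried out by pure combinatorics without invoking PCP machinery.

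Finally I would choose the composition depth $t$ and the base parameters so that the resulting gap is exactly $(k,k')$ with $k' = \frac{1}{2}(\log N/\log\log N)^{1/k}$, while the total universe size is the stated $N$. The $k$-\textsc{Clique} lower bound of $n^{\Omega(k)}$, transported through the polynomial blow-up $N = N_0^{O(t)}$, yields an $N^{\lambda k}$ lower bound for some absolute $\lambda\in(0,1)$ provided $t$ is sub-logarithmic in $N$; the specific ceiling $k\le \tfrac12\log\log N/\log\log\log N$ is exactly what makes the bookkeeping close. Beyond the soundness analysis, the secondary obstacle will be this parameter tracking: ensuring that the $1/k$-th root in $k'$ and the $\log\log N/\log\log\log N$ bound on $k$ emerge consistently from the recursion, and that randomization survives the reduction (so that ``high probability'' is preserved from the ETH assumption through the final gap problem).
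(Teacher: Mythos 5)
This lemma is not proved in the paper at all; it is quoted verbatim from Lin~\cite{Lin19} (the paper explicitly says ``The following is proved in~\cite{Lin19}. See also Theorem~7 in~\cite{KochST}''), so there is no in-paper proof to compare yours against. Your proposal is therefore an attempt to reprove Lin's theorem, and it departs substantially from how Lin actually does it.

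Two concrete issues. First, the architecture is backwards. You propose to (a) obtain a base \SC instance with a constant gap $(k,(1+\delta)k)$ from $k$-\textsc{Clique}, and then (b) amplify that gap by iterated composition. But step (a) is not a modest ``atom'': producing \emph{any} constant multiplicative gap for parameterized \SC under plain ETH (not Gap-ETH) was itself the hard open problem. Reductions from $k$-\textsc{Clique} to exact $k$-\textsc{Dominating Set}/\SC preserve optimum size but give no inapproximability gap, and a PCP-style gap would need sparsification and proof-composition machinery that is exactly what a ``pure combinatorics'' argument is supposed to avoid. If you could build step (a), you would essentially already have the theorem (up to gap size), and the subsequent iterated amplification would be the easy part. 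Lin's contribution is precisely to sidestep step (a). Second, Lin's proof is a one-shot gap-\emph{producing} reduction, not an iterated gap-\emph{amplifying} one: it starts from the gapless parameterized $k$-\textsc{Set Cover} (equivalently $k$-\textsc{Dominating Set}), whose $N^{\Omega(k)}$ ETH lower bound is standard and gap-free, and composes it once with an explicit combinatorial ``gap gadget'' --- a bipartite threshold structure in which every $k$-tuple of left vertices shares a common right-neighbor, while any small family of right-vertex preimages fails to cover --- to jump directly to the $\left(\log N/\log\log N\right)^{1/k}$ gap. The $1/k$-th root in $k'$ and the $k \le \tfrac12 \log\log N/\log\log\log N$ ceiling come from the parameters of that single gadget, not from tracking a recursion depth $t$. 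As written, your proposal's soundness step ``projects to covers on many coordinates'' is doing no more work than a direct product, which amplifies a pre-existing gap but cannot manufacture one, so the whole plan stalls at the missing base gap.
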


\subsection{Concept Classes}
\correct

For the lattice $\{0,1\}^n$, and $x,y\in\{0,1\}^n$, we define the partial order $x\le y$ if $x_i\le y_i$ for every~$i$. When $x\le y$ and $x\not=y$, we write $x<y$. If $x<y$, we say that $x$ is {\it below} $y$, or $y$ is {\it above} $x$. A Boolean function $f:\{0,1\}^n\to \{0,1\}$ is {\it monotone} if, for every $x\le y$, we have $f(x)\le f(y)$. A {\it literal} is a variable or negated variable. A {\it term} is a conjunction ($\wedge$) of literals. A {\it clause} is a disjunction ($\vee$) of literals. A {\it monotone term} (resp. clause) is a conjunction (resp. disjunction) of unnegated variables. The {\it size} of a term $T$, $|T|$, is the number of literals in the term $T$. A DNF (resp. CNF) is a disjunction (resp. conjunction) of terms (resp. clauses). The {\it size} $|F|$ of a DNF (resp. CNF) $F$ is the number of terms (resp. clauses) in $F$. A {\it monotone DNF} (resp. monotone CNF) is a DNF (resp. CNF) with monotone terms (resp. clauses). 

We define the following classes
\begin{enumerate}
    \item size-$s$ DNF and size-$s$ \textsc{Monotone DNF} are the classes of DNF and monotone DNF, respectively, of size at most $s$.
    \item size $s$-\textsc{DT} and size-$s$ \textsc{Monotone DT} are the classes of decision trees and monotone decision trees, respectively, with at most $s$ leaves. 
    \item $k$-\textsc{Junta} and \textsc{Monotone} $k$-\textsc{Junta} are the classes of Boolean functions and monotone Boolean functions that depend on at most $k$ variables. 
\end{enumerate}
It is well known that 
\begin{eqnarray}\label{Mon}
\mbox{\textsc{Monotone ($\log s$)-Junta$\subset$  {\rm size}-$s$ Monotone DT }$\subset$  {\rm size}-$s$ \textsc{Monotone} DNF }.
\end{eqnarray}

\subsection{Functions and Distributions}
\correct

For any set $R$, we define $\cU(R)$ to be the uniform distribution over $R$. For a distribution~$\cD$ over $\{0,1\}^n$ and two Boolean functions $f$ and $g$, we define $\dist_\cD(f,g)=\Pr_{\bx\sim \cD}[f(\bx)\not=g(\bx)]$. Here, bold letters denote random variables. If $\dist_\cD(f,g)=0$, then we say that $f=g$ {\it over $\cD$}. 
For a class of functions $C$, we say that $f$ is $C$ over $\cD$ if there is a function $g\in C$ such that $f=g$ over $\cD$.
\begin{definition}
($\Gamma^{\cS}$ and $\cD^{\cS}$)
Let $\cS=(S,U,E)$ be a set cover instance with $S=[n]$. Recall that we identify each element $u\in U$ with the vector $(u_1,\ldots,u_n)\in\{0,1\}^n$ where $u_i=0$ if and only if $(i,u)\in E$. 
We define the partial function $\Gamma^\cS:\{0,1\}^n\to\{0,1\}$ where $\Gamma^\cS(x)=0$ if $x\in U$ and $\Gamma^\cS(1^n)=1$. We define the distribution $\cD^\cS$ over $\{0,1\}^n$ where $\cD^\cS(x)=1/2$ if $x=1^n$, $\cD^\cS(x)=1/(2|U|)$ if $x\in U$, and $\cD^\cS(x)=0$ otherwise. We will remove the superscript  $\cS$ when it is clear from the context and write $\Gamma$ and $\cD$.
\end{definition}

\begin{fact}\label{factopt} We have
\begin{enumerate}
    \item $C\subseteq S$ is a set cover of $\cS=(S,U,E)$, if and only if $\Gamma(x)=\bigwedge_{i\in C} x_i$ over $\cD$. 
    \item In particular, If $T$ is a monotone term of size $|T|<\opt(\cS)$, then there is $u\in U$ such that~$T(u)=1$.
\end{enumerate}
\end{fact}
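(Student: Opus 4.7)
The plan is to unwind the definitions of $\Gamma$, $\cD$, and the encoding $u \leftrightarrow (u_1,\ldots,u_n)$ and read off both statements by inspecting the support of $\cD$, which is exactly $U \cup \{1^n\}$. No new construction or probabilistic argument is needed; the content of the fact is really just a translation between three equivalent ways of saying ``$C$ covers every $u \in U$''.

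For part~1 I would check the candidate identity $\Gamma(x)=\bigwedge_{i\in C}x_i$ separately at each support point. At $x=1^n$ both sides equal $1$, so this point contributes nothing. At $x=u\in U$ the left side is $0$ by definition of $\Gamma$, and the right side equals $0$ iff some $u_i=0$ with $i\in C$; by the encoding convention $u_i=0\iff (i,u)\in E$, this is exactly the statement that $u$ is covered by $C$. Quantifying over $u\in U$ then gives both directions of the biconditional simultaneously.

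Part~2 I would derive as an immediate contrapositive. A monotone term $T$ with $|T|<\opt(\cS)$ has the form $\bigwedge_{i\in C}x_i$ for some $C\subseteq S$ with $|C|<\opt(\cS)$, so $C$ cannot be a set cover. By part~1 the identity $\Gamma=T$ must therefore fail at some support point; since it automatically holds at $1^n$, it must fail at some $u\in U$, and the only way to fail there is $T(u)=1$ while $\Gamma(u)=0$.

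The one thing to be careful about, which is really the only place a reader could slip, is the slightly counter-intuitive convention that $u_i=0$ encodes the \emph{presence} of the edge $(i,u)$; once that is kept straight, the proof is a short bookkeeping exercise rather than a real argument, so I do not anticipate any genuine obstacle.
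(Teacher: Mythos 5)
Your proposal is correct and follows essentially the same route as the paper: check the identity pointwise on the support $\{1^n\}\cup U$ using the encoding $u_i=0\iff (i,u)\in E$, then get part~2 as the contrapositive of part~1. The paper leaves part~2 and the reverse direction of part~1 to the reader, so your write-up is just a slightly more explicit version of the same argument.
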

\begin{proof}
Let $C$ be a set cover of $\cS$. First, we have $\Gamma(1^n)=1$. Now, since $C$ is a set cover, every vertex $u\in U$ is adjacent to some vertex in $C$. This is equivalent to: for every assignment $u\in U$, there is $i\in C$ such that $u_i=0$. Therefore, $\wedge_{i\in C} u_i=0$ for all $u\in U$. Thus, $\Gamma(x)=\bigwedge_{i\in C} x_i$ over~$\cD$.

The other direction can be easily seen by tracing backward in the above proof.
\end{proof}
For an odd $\ell$, define $\Delta^0=\{a\in \{0,1\}^\ell|\wt(a)=\lfloor\ell/2\rfloor\}$ and $\Delta^1=\{a\in \{0,1\}^\ell|\wt(a)= \lceil\ell/2\rceil\}$, where $\wt(a)$ is the Hamming weight of $a$. 
Notice that $|\Delta^0|=|\Delta^1|={\ell\choose \lfloor\ell/2\rfloor}$. 

\begin{definition} ($\Gamma_\ell$, $\cD_\ell$, $\Delta^0_n$ and $\Delta^1_n$) For an odd $\ell$, define $\Delta^1_n=(\Delta^1)^n$ and\footnote{Here $\Delta^\xi=\Delta^0$ if $\xi=0$ and $\Delta^1$ if $\xi=1$.} $\Delta^0_n:=\cup_{u\in U}\prod_{i=1}^n\Delta^{u_i}=\cup_{u\in U}(\Delta^{u_1}\times \Delta^{u_2}\times \cdots\times \Delta^{u_n})$. 
Define the distribution $\cD_\ell:(\{0,1\}^\ell)^n\to [0,1]$ to be $\cD_\ell(y)=1/(2|\Delta^1_n|)=1/(2|\Delta^1|^n)$ if $y\in \Delta^1_n$, $\cD_\ell(y)=1/(2|\Delta^0_n|)=1/(2|U|\cdot|\Delta^0|^n)$ if $y\in \Delta^0_n$, and $\cD_\ell(y)=0$ otherwise. 
We define the partial function $\Gamma_\ell$ over the support $\Delta^0_n\cup\Delta^1_n$ of $\cD_\ell$ to be $1$ if $y\in \Delta^1_n$ and $0$ if $y\in \Delta^0_n$.
\end{definition}
We note here that the distribution $\cD_\ell$ is well-defined. This is because: First, the sum of the distribution of the points in $\Delta_n^1$ is $1/2$. 
Second, for two different $u,u'\in U$, we have that $\prod_{i=1}^n\Delta^{u_i}$ and $\prod_{i=1}^n\Delta^{u'_i}$ are disjoint sets. Therefore, $|\Delta_n^0|=|U|\cdot |\Delta^0|^n$, and therefore, the sum of the distribution of all the points in $\Delta_n^0$ is half. In particular, 
\begin{fact}\label{halff}
We have $\Prr{\by\sim \cD_\ell}[\Gamma_\ell(\by)=1]=\Prr{\by\sim \cD_\ell}[\Gamma_\ell(\by)=0]=\Prr{\cD_\ell}[\Delta^1_n]=\Prr{\cD_\ell}[\Delta^0_n]=\frac{1}{2}.$
\end{fact}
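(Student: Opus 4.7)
The plan is to verify the four equalities in turn by computing the total mass that $\cD_\ell$ assigns to each of the two pieces $\Delta^1_n$ and $\Delta^0_n$ of its support. Once we have $\Pr_{\cD_\ell}[\Delta^1_n] = \Pr_{\cD_\ell}[\Delta^0_n] = 1/2$, the two probabilities involving $\Gamma_\ell$ follow immediately because, by definition, $\Gamma_\ell$ equals $1$ on $\Delta^1_n$ and $0$ on $\Delta^0_n$, and $\Delta^1_n \cup \Delta^0_n$ is the support of $\cD_\ell$.

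The $\Delta^1_n$ side is immediate: $|\Delta^1_n| = |\Delta^1|^n$ and every point of $\Delta^1_n$ carries $\cD_\ell$-mass $1/(2|\Delta^1|^n)$, so summing gives $1/2$. For the $\Delta^0_n$ side, what is needed is the counting identity $|\Delta^0_n| = |U| \cdot |\Delta^0|^n$, since then summing $1/(2|U|\cdot|\Delta^0|^n)$ over $\Delta^0_n$ likewise gives $1/2$. I would prove this by showing that the product sets $\prod_{i=1}^n \Delta^{u_i}$ are pairwise disjoint as $u$ ranges over $U$; granted disjointness, $|\Delta^0_n|$ equals $\sum_{u \in U}\prod_{i=1}^n |\Delta^{u_i}|$, and each factor is $|\Delta^0| = \binom{\ell}{(\ell-1)/2} = \binom{\ell}{(\ell+1)/2} = |\Delta^1|$ by the symmetry of binomial coefficients about the middle (using that $\ell$ is odd), so the product collapses to $|\Delta^0|^n$, giving $|\Delta^0_n|=|U|\cdot|\Delta^0|^n$ as wanted.

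The one step requiring care, and the only real obstacle, is the pairwise disjointness; it is where the standing assumption (stated earlier in the paper) that distinct elements of $U$ correspond to distinct $\{0,1\}^n$-vectors is essential. Given $u \not= u'$ in $U$, pick a coordinate $i$ with $u_i \not= u'_i$, say $u_i = 0$ and $u'_i = 1$. Because $\ell$ is odd, $\Delta^0$ and $\Delta^1$ consist of vectors of different Hamming weights $\lfloor\ell/2\rfloor$ and $\lceil\ell/2\rceil$, hence $\Delta^{u_i} \cap \Delta^{u'_i} = \Delta^0 \cap \Delta^1 = \emptyset$. Any $y \in \prod_j \Delta^{u_j}$ has $y_i \in \Delta^{u_i}$ and therefore cannot lie in $\prod_j \Delta^{u'_j}$, so the two product sets are disjoint. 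This completes the verification, and with both $\Pr_{\cD_\ell}[\Delta^1_n]$ and $\Pr_{\cD_\ell}[\Delta^0_n]$ equal to $1/2$, the fact follows.
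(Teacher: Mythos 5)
Your proof is correct and follows the same route the paper uses (in the remark immediately preceding the Fact): compute the total mass on $\Delta^1_n$ directly, and on $\Delta^0_n$ via the disjointness of the product sets $\prod_{i}\Delta^{u_i}$ over $u\in U$, which gives $|\Delta^0_n|=|U|\cdot|\Delta^0|^n$. You simply spell out the disjointness argument and the binomial-symmetry identity $|\Delta^0|=|\Delta^1|$, both of which the paper states without proof.
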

For $y\in(\{0,1\}^\ell)^n$, we write $y=(y_1,\ldots,y_n)$, where $y_j=(y_{j,1},y_{j,2},\ldots,y_{j,\ell})\in\{0,1\}^\ell$.  Let $(\Maj(y_i))_{i\in [n]}=(\Maj(y_1),\ldots,\Maj(y_n))$ where $\Maj$ is the majority function.

\begin{fact}\label{oell}
If $C\subseteq S$ is a set cover of $\cS$, then $\Gamma_\ell(y)=\Gamma((\Maj(y_i))_{i\in [n]})=\bigwedge_{i\in C}\Maj(y_i)$ over $\cD$.  In particular, $\Gamma_\ell$ is \textsc{Monotone $\opt(\cS)\ell$-Junta} over $\cD$.
\end{fact}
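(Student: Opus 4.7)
The plan is a short case analysis built on the fact that, for odd $\ell$, the sets $\Delta^0$ and $\Delta^1$ are precisely the level sets $\Maj^{-1}(0)$ and $\Maj^{-1}(1)$: since $|\Delta^0| = \wt^{-1}(\lfloor \ell/2\rfloor)$ consists of strings of minority weight and $|\Delta^1| = \wt^{-1}(\lceil \ell/2\rceil)$ consists of strings of majority weight, one has $\Maj(a)=\xi$ for every $a\in\Delta^\xi$, $\xi\in\{0,1\}$. So for any $y=(y_1,\ldots,y_n)$ in the support $\Delta^0_n\cup\Delta^1_n$ of $\cD_\ell$, reading off each $\Maj(y_i)$ recovers exactly the ``index'' telling us which factor $\Delta^\xi$ the block $y_i$ came from.

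Using this, I would verify $\Gamma_\ell(y)=\Gamma((\Maj(y_i))_{i\in[n]})$ over $\cD_\ell$ in two cases. First, if $y\in\Delta^1_n=(\Delta^1)^n$ then $\Maj(y_i)=1$ for every $i$, so $(\Maj(y_i))_{i\in[n]}=1^n$, and both sides equal $1$. Second, if $y\in\Delta^0_n$ then by definition $y\in\prod_{i=1}^n\Delta^{u_i}$ for some $u\in U$; since the vectors in $U$ are assumed distinct, the sets $\prod_{i=1}^n\Delta^{u_i}$ for different $u$ are disjoint (as already used to argue $\cD_\ell$ is well-defined), so this $u$ is unique. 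Then $\Maj(y_i)=u_i$ for every $i$, giving $(\Maj(y_i))_{i\in[n]}=u\in U$, and both sides equal $0$.

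The second equality $\Gamma((\Maj(y_i))_{i\in[n]})=\bigwedge_{i\in C}\Maj(y_i)$ is then immediate from Fact~\ref{factopt}: $C$ being a set cover of $\cS$ means $\Gamma(x)=\bigwedge_{i\in C}x_i$ on the support $U\cup\{1^n\}$ of $\cD$, and the case analysis above shows that $(\Maj(y_i))_{i\in[n]}$ always lands in this support when $y$ lies in the support of $\cD_\ell$. So we may plug $x_i=\Maj(y_i)$ into the identity.

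Finally, for the ``In particular'' claim I would take $C$ to be a \emph{minimum} set cover, $|C|=\opt(\cS)$. Then over $\cD_\ell$ the function $\Gamma_\ell$ agrees with $\bigwedge_{i\in C}\Maj(y_i)$, which depends only on the $|C|\cdot\ell=\opt(\cS)\cdot\ell$ variables $\{y_{i,j}:i\in C,\ j\in[\ell]\}$ and, being a conjunction of monotone majorities, is monotone. Hence $\Gamma_\ell$ is \textsc{Monotone} $(\opt(\cS)\ell)$-\textsc{Junta} over $\cD_\ell$. There is no real obstacle; the only subtlety worth flagging is the use of the distinctness of the vectors in $U$ to make the ``which $u$ does $y$ come from'' map well-defined in the $\Delta^0_n$ case, but this is already built into the setup in Section~2.1.
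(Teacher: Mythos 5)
Your proof is correct and follows essentially the same route as the paper's: observe that $\Maj$ restricted to $\Delta^\xi$ returns $\xi$, then split into the two cases $y\in\Delta^1_n$ and $y\in\Delta^0_n$, recovering $1^n$ and the unique $u\in U$, respectively. You also spell out the derivation of the second equality from Fact~\ref{factopt} and the "In particular" junta claim, which the paper leaves implicit but which follow exactly as you say.
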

\begin{proof}
First notice that $\Maj(x)=1$ if $x\in \Delta^1$ and $\Maj(x)=0$ if $x\in \Delta^0$. Therefore, for $x\in \Delta^\xi$, $\xi\in \{0,1\}$ we have $\Maj(x)=\xi$. 

For $y\in \Delta_n^1=(\Delta^1)^n$, $(\Maj(y_i))_{i\in [n]}=1^n$ and $\Gamma_\ell(y)=1=\Gamma(1^n)$.

For $y\in \Delta^0_n=\cup_{u\in U}(\Delta^{u_1}\times \Delta^{u_2}\times \cdots\times \Delta^{u_n})$, there is $u$ such that $y\in \Delta^{u_1}\times \Delta^{u_2}\times \cdots\times \Delta^{u_n}$. Then, $(\Maj(y_i))_{i\in [n]}=u$ and $\Gamma_\ell(y)=\Gamma((\Maj(y_i))_{i\in [n]})=\Gamma(u)=0$.
\end{proof}
For $t\in [\ell],\xi\in \{0,1\}$ and $u\in \{0,1\}^\ell$, we define $u^{t\gets\xi}\in \{0,1\}^\ell$ the vector that satisfies
$$u^{t\gets \xi}_i=\left\{\begin{array}{ll}
u_i&i\not=t\\
\xi&i=t
\end{array}\right..$$ 

Let $z\in (\{0,1\}^\ell)^n$. For $j\in [\ell]^n$ and $a\in \{0,1\}^n$, define $z^{j\gets a}=(z_1^{j_1\gets a_1},\ldots,z_n^{j_n\gets a_n})$. 
For a set $V\subseteq \{0,1\}^n$, we define $z^{j\gets V}=\{z^{j\gets v}|v\in V\}$.

We define $\one(z)=\prod_{i=1}^n\{m_i|z_{i,m_i}=1\}=\{m_1|z_{1,m_1}=1\}\times \cdots\times \{m_n|z_{n,m_n}=1\}$.
\begin{fact}\label{Factz}
Let $w\in \Delta_n^1$, $j\in \one(w)$, and $T$ be a term that satisfies $T(w)=1$. Then
\begin{enumerate}
\item\label{Factz1} $w^{j\gets U}\subseteq \Delta_n^0$.
    \item\label{Factz2} $|w^{j\gets U}|=|U|$.
    \item\label{Factz3} If $T^j(y_{1,j_1},\ldots,y_{n,j_n})$ is the conjunction of all the variables that appear in $T$ of the form $y_{i,j_i}$, then $T(w^{j\gets a})=T^j(a)$.
\end{enumerate}
\end{fact}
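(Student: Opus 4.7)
The plan is to prove the three items one at a time, using only the definitions of $\Delta^0, \Delta^1, w^{j \gets a},$ and $\one(w)$.

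For item~\ref{Factz1}, I would fix an arbitrary $u \in U$ and show $w^{j \gets u} \in \prod_{i=1}^n \Delta^{u_i}$, which lies in $\Delta_n^0$ by definition. For this, the key computation is coordinate-wise: $w_i \in \Delta^1$ has Hamming weight $\lceil \ell/2\rceil = (\ell+1)/2$, and since $j \in \one(w)$ we have $w_{i,j_i} = 1$. If $u_i = 1$, then $(w_i)^{j_i \gets 1} = w_i \in \Delta^1 = \Delta^{u_i}$. If $u_i = 0$, then $(w_i)^{j_i \gets 0}$ has weight $(\ell+1)/2 - 1 = (\ell-1)/2 = \lfloor \ell/2 \rfloor$, so it lies in $\Delta^0 = \Delta^{u_i}$. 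Since $\ell$ is odd, the arithmetic works out in both cases and item~\ref{Factz1} follows.

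For item~\ref{Factz2}, I would argue the map $u \mapsto w^{j \gets u}$ is injective on $U$. Given $u \neq u'$ in $U$, there is some $i$ with $u_i \neq u'_i$; then the $j_i$-th coordinate of the $i$-th block of $w^{j \gets u}$ is $u_i$, while that of $w^{j \gets u'}$ is $u'_i$. So the two points differ, hence $|w^{j \gets U}| = |U|$.

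For item~\ref{Factz3}, I would exploit the fact that $w$ and $w^{j\gets a}$ agree on every coordinate $(p,q)$ with $q \neq j_p$. Therefore any literal in $T$ on a variable $y_{p,q}$ with $q \neq j_p$ has the same truth value at $w$ and at $w^{j \gets a}$; since $T(w) = 1$, each such literal is true at $w^{j \gets a}$. Hence $T(w^{j \gets a})$ is determined entirely by the literals of $T$ whose underlying variable is some $y_{i,j_i}$, which is the sub-conjunction $T^j$. Finally, at the point $w^{j \gets a}$ the variable $y_{i,j_i}$ takes the value $a_i$, so $T^j(w^{j \gets a}) = T^j(a)$.

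I do not expect any real obstacle; item~\ref{Factz3} is the most conceptual but is essentially a ``freeze the irrelevant coordinates'' argument. The one point that requires a moment of care is that $\ell$ is odd, which is used implicitly in item~\ref{Factz1} to guarantee that decreasing the weight of an element of $\Delta^1$ by exactly one lands in $\Delta^0$; this is why $\one(w)$ is defined using the positions where $w_{i,m_i} = 1$, matching the parity shift from $\lceil \ell/2\rceil$ to $\lfloor \ell/2\rfloor$.
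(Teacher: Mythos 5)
Your proposal is correct and mirrors the paper's own proof closely: item~\ref{Factz1} by the coordinatewise check that $w_i^{j_i\gets u_i}\in\Delta^{u_i}$ (you just make the $u_i=0$ vs.\ $u_i=1$ cases explicit), item~\ref{Factz2} by injectivity via a differing coordinate, and item~\ref{Factz3} by splitting $T$ into the literals on the frozen coordinates (all satisfied at $w$, hence at $w^{j\gets a}$) and the sub-conjunction $T^j$. Your closing remark on the role of odd $\ell$ in item~\ref{Factz1} is also accurate.
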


\begin{proof} We first prove item~\ref{Factz1}. Let $u\in U$ and $i$ be any integer in $[n]$. Since $w\in \Delta_n^1$, we have $w_i\in \Delta^1$. Since $j\in one(w)$, we have $w_{i,j_i}=1$. Therefore, $w_i^{j_i\gets u_i}\in \Delta^{u_i}$ for all $i\in [n]$ and $w^{j\gets u}\in \prod_{i=1}^n\Delta^{u_i}$. Thus, $w^{j\gets u}\in \Delta_n^0$ for all $u\in U$. 

To prove item~\ref{Factz2}, let $u,u'$ be two distinct elements of $U$. There is $i$ such that $u_i\not=u'_i$. Therefore $w_i^{j_i\gets u_i}\not=w_i^{j_i\gets u'_i}$ and $w^{j\gets u}\not=w^{j\gets u'}$. 

We now prove item~\ref{Factz3}. Let $T'$ be the conjunction of all the variables that appear in $T$ that are not of the form $y_{i,j_i}$. Then $T=T'\wedge T^j$. Since $T(w)=1$, we have $T'(w)=1$. Since the entries of  $w^{j\gets a}$ are equal to those in $w$ on all the variables that are not of the form $y_{i,j_i}$, we have $T'(w^{j\gets a})=1$. Therefore, $T(w^{j\gets a})=T'(w^{j\gets a})\wedge T^j(w^{j\gets a}_{1,j_1},\ldots,w^{j\gets a}_{n,j_n})=T^j(a)$.
\end{proof}
We now give a different way of sampling according to the distribution $\cD_\ell$.
\begin{fact}\label{sample} Let $\cS$ be a \SC instance.  The following is an equivalent way of sampling from $\cD_\ell$.
\begin{enumerate}
\item Draw $\xi\in\{0,1\}$ u.a.r.\footnote{Uniformly at random.}
\item Draw $w\in \Delta^1_n$ u.a.r.
\item If $\xi=1$ then output $y=w$.
\item If $\xi=0$ then 
\begin{enumerate}
    \item draw $j\in \one(w)$ u.a.r.
    \item draw $v\in w^{j\gets U}$ u.a.r.
    \item output $y=v$.
\end{enumerate}
\end{enumerate}
In particular, for any event $X$,  
$$\Prr{\by\sim\cU(\Delta_n^0)}[X]=\Prr{\bw\sim\cU(\Delta^1_n), \bj\sim\cU(\one(\bw)),\by\sim \cU(\bw^{\bj\gets U})}[X].$$
\end{fact}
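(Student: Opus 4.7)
The plan is to first observe that once we establish the ``in particular'' statement --- that the inner procedure (steps~2, 4a, 4b, 4c) outputs $\by$ uniformly on $\Delta_n^0$ --- the full equivalence with $\cD_\ell$ follows immediately. Conditioning on $\bxi = 1$ gives $\by$ uniform on $\Delta_n^1$ by step~3, and conditioning on $\bxi = 0$ gives $\by$ uniform on $\Delta_n^0$ by the inner claim; since $\bxi$ is fair, the resulting distribution assigns mass $1/2$ to each of $\Delta_n^1, \Delta_n^0$ with uniform conditionals, matching the definition of $\cD_\ell$.

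For the main claim, I would fix an arbitrary $y \in \Delta_n^0$ and compute $\Pr[\bv = y]$. Because the products $\prod_i \Delta^{u_i}$ over distinct $u \in U$ are pairwise disjoint, there is a unique $u = u(y) \in U$ such that $y \in \prod_i \Delta^{u_i}$. By Fact~\ref{Factz} item~\ref{Factz2}, $|\bw^{\bj \gets U}| = |U|$, so step~4c is equivalent to drawing $\bu \sim \cU(U)$ independently and setting $\bv = \bw^{\bj \gets \bu}$. In particular, the event $\{\bv = y\}$ forces $\bu = u(y)$, contributing a factor $1/|U|$.

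It then suffices to count pairs $(w, j)$ with $w \in \Delta_n^1$, $j \in \one(w)$, and $w^{j \gets u(y)} = y$. I would do this coordinate-by-coordinate. For each $i$ with $u_i = 1$, both $y_i$ and the required $w_i$ lie in $\Delta^1$ (weight $\lceil \ell/2 \rceil$) and must agree off position $j_i$; this forces $w_i = y_i$ and leaves $j_i$ free to range over the $\lceil \ell/2 \rceil$ one-positions of $y_i$. For each $i$ with $u_i = 0$, $y_i \in \Delta^0$ while $w_i \in \Delta^1$ with $w_{i,j_i}=1$ and $y_{i,j_i} = 0$; hence $w_i$ is obtained from $y_i$ by flipping one zero-bit to a one, giving $\lceil \ell/2 \rceil$ choices for $j_i$ with $w_i$ determined by $j_i$. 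Either way, the per-coordinate count is $\lceil \ell/2 \rceil$, for a total of $\lceil \ell/2 \rceil^n$ valid pairs.

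Each such pair contributes $\frac{1}{|\Delta^1|^n} \cdot \frac{1}{\lceil \ell/2 \rceil^n} \cdot \frac{1}{|U|}$ to $\Pr[\bv = y]$, using $|\Delta_n^1| = |\Delta^1|^n$ and $|\one(w)| = \lceil \ell/2 \rceil^n$ for $w \in \Delta_n^1$. Summing over the $\lceil \ell/2 \rceil^n$ valid pairs yields $\frac{1}{|U| \cdot |\Delta^1|^n} = \frac{1}{|\Delta_n^0|}$, since $|\Delta^0| = |\Delta^1|$ and $|\Delta_n^0| = |U| \cdot |\Delta^0|^n$. This shows $\bv$ is uniform on $\Delta_n^0$, which completes the argument. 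No serious obstacle arises; the only subtle point is the coordinate-wise case split on $u_i \in \{0,1\}$ and the use of Fact~\ref{Factz} to justify that the map $u \mapsto w^{j \gets u}$ is injective into $\Delta_n^0$, so the final uniform draw can be replaced by a uniform draw of $u$ from $U$.
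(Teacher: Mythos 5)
Your proof is correct and takes essentially the same approach as the paper: fix a point $y\in\Delta_n^0$ and compute its probability under the sampling procedure, showing it equals $1/(2|\Delta_n^0|)$. The only difference is bookkeeping — you count pairs $(w,j)$ directly, coordinate by coordinate, getting a clean $\lceil\ell/2\rceil$ choices per coordinate regardless of whether $u_i$ is $0$ or $1$, whereas the paper runs the chain rule $\Pr[\bw>z]\cdot\Pr[z\in\bw^{\bj\gets U}\mid\bw>z]\cdot\Pr[\bv=z\mid\cdot]$ and relies on $\wt(u)$-dependent factors $\lceil\ell/2\rceil^{\pm(n-\wt(u))}$ that cancel; your version is marginally tidier but substantively identical.
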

\begin{proof} Denote the above distribution by $\cD'$. By Item~\ref{Factz1} in Fact~\ref{Factz}, if $w\in \Delta_n^1$ and $j\in\one(w)$, then $w^{j\gets U}\subseteq \Delta^0_n$. Therefore,  
for $z\in \Delta^1_n$, $\Pr_{\by\sim \cD'}[\by=z|\bxi=0]=0$ and then
$$\Prr{\by\sim\cD'}[\by=z]=\Prr{\bxi\sim \cU({\{0,1\})}}{[\bxi=1]}\cdot \Prr{\by\sim \cU({\Delta^1_n})}[\by=z]=\frac{1}{2|\Delta^1_n|}=\frac{1}{2|\Delta^1|^n}.$$
For $z\in\Delta^0_n$, suppose $z\in \Delta^{u_1}\times\cdots\times \Delta^{u_n}$ where $u\in U$. 
In the sampling according to $\cD'$ and when $\xi=0$, since for $j\in \one(w)$, the elements of $w^{j\gets U}$ are below $w$, we have $\Prr{\by\sim \cD'}[\by=z|\bw\not>z]=0$. Therefore,
\begin{eqnarray}\label{eq01}
\Prr{\by\sim\cD'}[\by=z]&=&\Prr{\bxi\sim\cU({\{0,1\}})}[\bxi=0]\cdot \Prr{\bw\sim\cU({\Delta^1_n})}[\bw>z]\cdot\nonumber\\
&&\cdot \Prr{\bj\sim \cU(one(\bw))}[z\in \bw^{\bj\gets U}|\bw>z,\bw\in \Delta^1_n]\cdot \Prr{\bv\sim \cU(\bw^{\bj\gets U})}[\bv=z|z\in \bw^{\bj\gets U}].
\end{eqnarray}

Now, since, for $x\in \Delta^0$, the number of elements in $\Delta^1$ that are above $x$ is $\lceil\ell/2\rceil$, we have that
the number of $w\in \Delta^1_n=(\Delta^1)^n$ that are above $z\in\Delta^{u_1}\times\cdots\times \Delta^{u_n}$ is $\lceil \ell/2\rceil^{n-\wt(u)}$. Therefore, 
\begin{eqnarray}\label{eq02}
\Prr{\bw\sim\cU({\Delta^1_n})}[\bw>z]=\frac{\lceil \ell/2\rceil^{n-\wt(u)}}{|\Delta^1_n|}.
\end{eqnarray}
Now let $w>z$ and $w\in \Delta^1_n$. Since for two different $u,u'\in U$, we have $\prod_{i=1}^n\Delta^{u_i}$ and $\prod_{i=1}^n\Delta^{u'_i}$ are disjoint sets, and since $z\in\Delta^{u_1}\times\cdots\times \Delta^{u_n}$, we have $z\in w^{j\gets U}$ if and only if $z=w^{j\gets u}$. Therefore, the number of elements $j\in \one(w)$ that satisfy $z\in w^{j\gets U}$ is the number of elements $j\in \one(w)$ that satisfy $z=w^{j\gets u}$. This is the number of elements $j\in \one(w)$ that satisfies for every $u_i=0$, $z_{i,j_i}=0$. 
For a $j$ u.a.r. and a fixed $i$ where $u_i=0$, the probability that $z_i$ and $w_i$ differ only in entry $j_i$ is $1/\lceil \ell/2\rceil$. Therefore,
\begin{eqnarray}\label{eq03}
\Prr{\bj\sim \cU(one(\bw))}[z\in \bw^{\bj\gets U}|\bw>z,\bw\in \Delta^1_n]=\frac{1}{\lceil\ell/2\rceil^{n-\wt(u)}}.
\end{eqnarray}
Finally, by item~\ref{Factz2} in Fact~\ref{Factz}, since $|w^{j\gets U}|=|U|$, we have
\begin{eqnarray}\label{eq04}
\Prr{\bv\sim \cU(\bw^{j\gets U})}[\bv=z|z\in \bw^{j\gets U}]=\frac{1}{|\bw^{j\gets U}|}=\frac{1}{|U|}.
\end{eqnarray}
By (\ref{eq01}), (\ref{eq02}), (\ref{eq03}), and (\ref{eq04}), we have
\begin{eqnarray*}
\Prr{\by\sim\cD'}[\by=z]
&=&\frac{1}{2}\cdot \frac{\lceil\ell/2\rceil^{n-\wt(u)}}{|\Delta_n^1|}\cdot \frac{1}{\lceil\ell/2\rceil^{n-\wt(u)}}\cdot \frac{1}{|U|}=\frac{1}{2|U|\cdot |\Delta_n^1|}=\frac{1}{2|U|\cdot |\Delta^0|^n}.
\end{eqnarray*}
\end{proof}

\section{Main Lemma}
\correct

In this section, we prove
\begin{lemma}\label{error}
Let $\cS=(S,U,E)$ be a set cover instance. If $F:(\{0,1\}^\ell)^n\to \{0,1\}$ is a DNF of size $|F|<2^{\opt(\cS)\ell/20}$, then $\dist_{\cD_\ell}(F,\Gamma_\ell)\ge 1/(8|U|)-2^{-\opt(\cS)\ell/20}$.
\end{lemma}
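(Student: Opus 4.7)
The plan is to introduce an intermediate ``pruned'' DNF $F'$ and apply the triangle inequality
\[
\dist_{\cD_\ell}(F,\Gamma_\ell) \ge \dist_{\cD_\ell}(F',\Gamma_\ell) - \dist_{\cD_\ell}(F,F'),
\]
where $F'$ is obtained from $F$ by deleting every term whose monotone size exceeds $\opt(\cS)\ell/5$. The two ingredients are then (i) a cheap-pruning bound $\dist_{\cD_\ell}(F,F') \le 2^{-\opt(\cS)\ell/20}$, and (ii) a structural lower bound $\dist_{\cD_\ell}(F',\Gamma_\ell) \ge 1/(8|U|)$. Subtracting yields the lemma.

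For (i), I would bound the $\cD_\ell$-mass of each deleted term directly. Since $\cD_\ell$ is a balanced mixture of $\cU(\Delta_n^1)$ and $\cU(\Delta_n^0)$, and within each block $\by_i$ is drawn from a fixed Hamming-weight slice, the events $\by_{i,k}=1$ are negatively correlated inside a block and independent across blocks; each has marginal at most $\lceil\ell/2\rceil/\ell = (\ell+1)/(2\ell)$. Thus for any term $T$ with $m$ unnegated literals, $\Pr_{\by\sim\cD_\ell}[T(\by)=1] \le \bigl((\ell+1)/(2\ell)\bigr)^m$. For $\ell\ge 3$ this is at most $(2/3)^m \le 2^{-m/2}$. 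Summing over the $|F| < 2^{\opt(\cS)\ell/20}$ deleted terms, each with $m > \opt(\cS)\ell/5$, gives $\dist_{\cD_\ell}(F,F') \le 2^{\opt(\cS)\ell/20}\cdot 2^{-\opt(\cS)\ell/10} = 2^{-\opt(\cS)\ell/20}$, as required.

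For (ii), set $p_1 = \Pr_{\by\sim\cU(\Delta_n^1)}[F'(\by)=0]$ and $p_0 = \Pr_{\by\sim\cU(\Delta_n^0)}[F'(\by)=1]$, so by Fact~\ref{halff} the distance equals $(p_0+p_1)/2$. If $p_1 \ge 1/(4|U|)$ we are done immediately; otherwise, I rewrite $p_0 = \Pr_{\bw,\bj,\bu}[F'(\bw^{\bj\gets\bu})=1]$ via Fact~\ref{sample}, with $\bw\sim\cU(\Delta_n^1)$, $\bj\sim\cU(\one(\bw))$, $\bu\sim\cU(U)$. Conditioning on the event $F'(\bw)=1$ (of probability $\ge 3/4$), fix any term $T$ of $F'$ with $T(\bw)=1$. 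Because $\bj\in\one(\bw)$ forces $\bw_{i,\bj_i}=1$, the term $T$ cannot contain a negated literal $\neg y_{i,\bj_i}$, so the projection $T^{\bj}$ of Fact~\ref{Factz}(3) is a \emph{monotone} term and $T(\bw^{\bj\gets\bu}) = T^{\bj}(\bu)$. Since $T$ has monotone size $\le \opt(\cS)\ell/5$, the expected number of unnegated literals in $T^{\bj}$ is at most $(\opt(\cS)\ell/5)/\lceil\ell/2\rceil \le 2\opt(\cS)/5$, so Markov gives $\Pr_{\bj}[|T^{\bj}|<\opt(\cS)] \ge 3/5$; by Fact~\ref{factopt}(2) any such short monotone term is satisfied by some $u\in U$, so $\Pr_{\bu}[T^{\bj}(\bu)=1] \ge 1/|U|$. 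Combining, $p_0 \ge (1-p_1)\cdot 3/(5|U|) \ge 9/(20|U|)$, giving $\dist_{\cD_\ell}(F',\Gamma_\ell) \ge 9/(40|U|) > 1/(8|U|)$.

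The main obstacle is the random-restriction step in (ii): one must first recognize that $T^{\bj}$ is automatically monotone under the joint conditioning $T(\bw)=1$, $\bj\in\one(\bw)$, and then balance the Markov trick against the cover hardness of Fact~\ref{factopt} so that the monotone-size threshold $\opt(\cS)\ell/5$ is small enough for the Markov event to have constant probability, yet large enough that the pruning in (i) costs only $2^{-\opt(\cS)\ell/20}$ in distance. Getting the constants $1/5$ and $1/20$ to line up with the $1/(8|U|)$ slack is the delicate part of the argument.
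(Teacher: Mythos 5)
Your proof is correct and follows essentially the same route as the paper: prune all terms of monotone size exceeding $\opt(\cS)\ell/5$, bound the pruning cost by a union bound over the small $\cD_\ell$-mass of each pruned term, then show the pruned DNF $F'$ is $\Omega(1/|U|)$-far via the random projection $\bj\in\one(\bw)$, a Markov bound on $|T^{\bj}|$, and Fact~\ref{factopt}. The only differences are cosmetic: you bound each term's mass over all of $\cD_\ell$ (via the marginal bound $(\ell+1)/(2\ell)\le 2^{-1/2}$) rather than only conditioned on $\Gamma_\ell=1$ as the paper does, and your Markov step threshold differs by a constant factor, yielding $9/(40|U|)$ instead of the paper's $1/(8|U|)$.
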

Note that Lemma~\ref{error} is used to prove Theorem~\ref{Th2} and~\ref{TTh2}. To prove Theorem~\ref{THEND}, we will need Lemma~\ref{errorA}, a stronger version of Lemma~\ref{error}. 

To prove the lemma, we first establish some results. 

For a term $T$, let $T_\cM$ be the conjunction of all the unnegated variables in $T$. We define the {\it monotone size} of $T$ to be $|T_\cM|$. 
\begin{claim}\label{Truncate}
Let $\cS=(S,U,E)$ be a set cover instance and $\ell\ge5$. If $F:(\{0,1\}^\ell)^n\to \{0,1\}$ is a DNF of size $|F|<2^{\opt(\cS)\ell/20}$, then there is a DNF, $F'$, of size $|F'|\le2^{\opt(\cS)\ell/20}$ with terms of monotone size at most $\opt(\cS)\ell/5$ such that $\dist_{\cD_\ell}(\Gamma_\ell,F')\le \dist_{\cD_\ell}(\Gamma_\ell,F)+2^{-\opt(\cS)\ell/20}$.
\end{claim}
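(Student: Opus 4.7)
The plan is to show that under $\cD_\ell$ any term of large monotone size is satisfied with exponentially small probability, so that the heavy-monotone terms of $F$ contribute negligibly and can be dropped; the new DNF $F'$ is then close to $F$ (hence close to $\Gamma_\ell$ by the triangle inequality) while trivially inheriting the size bound.

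The first step I would establish is the key pointwise estimate: for any term $T$ of monotone size $m$,
\[
\Pr_{\by \sim \cD_\ell}[T(\by)=1] \;\le\; (3/5)^m.
\]
Since $T(\by) \le T_\cM(\by)$ pointwise, it suffices to treat the monotone conjunction $T_\cM$. Recall that $\cD_\ell$ is the $\tfrac12,\tfrac12$ mixture of $\cU(\Delta^1_n)$ and $\cU(\Delta^0_n)$, and $\cU(\Delta^0_n)$ is itself the uniform mixture over $u\in U$ of the product distributions on $\prod_i \Delta^{u_i}$. Inside any such product component the blocks $y_i$ are independent, and each $y_i$ is uniform on some $\Delta^{\xi}$, i.e.\ on strings of weight $\lceil\ell/2\rceil$ or $\lfloor\ell/2\rfloor$. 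Thus any single coordinate equals $1$ with probability at most $\lceil\ell/2\rceil/\ell \le 3/5$ (using $\ell\ge 5$), and conditioning on further positions in the same block being $1$ only decreases the remaining marginals (a hypergeometric lower-tail argument). Multiplying across blocks gives $(3/5)^m$ for every mixture component, and the bound survives averaging.

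Next I would form $F'$ from $F$ by deleting every term whose monotone size exceeds $m_0 := \opt(\cS)\ell/5$; trivially $|F'|\le |F|<2^{\opt(\cS)\ell/20}$ and all surviving terms have monotone size $\le m_0$. Any $\by$ on which $F$ and $F'$ disagree must satisfy some deleted term, so by the union bound and the estimate above
\[
\Pr_{\by\sim\cD_\ell}[F(\by)\ne F'(\by)] \;\le\; |F|\cdot (3/5)^{m_0} \;<\; 2^{\opt(\cS)\ell/20}\cdot 2^{-\log_2(5/3)\cdot\opt(\cS)\ell/5} \;\le\; 2^{-\opt(\cS)\ell/20},
\]
using $\log_2(5/3) > 0.7 > 1/2$. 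The triangle inequality $\dist_{\cD_\ell}(\Gamma_\ell,F') \le \dist_{\cD_\ell}(\Gamma_\ell,F) + \Pr_{\by\sim\cD_\ell}[F(\by)\ne F'(\by)]$ then closes the argument. The only delicate point is the pointwise bound: because the index $u\in U$ is adversarial in the $\Delta^0_n$ half, one might worry that some choice of $u$ creates favourable correlations, but the estimate $(3/5)^m$ holds for each product component separately, which is what lets the averaging go through. The exponents $1/5$ and $1/20$ are precisely calibrated so that $|F|(3/5)^{m_0}$ lands at $2^{-\opt(\cS)\ell/20}$, with essentially no free slack.
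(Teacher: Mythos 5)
Your proof is correct, and it takes a genuinely different route from the paper's. The paper bounds only the conditional probability $\Pr_{\by\sim\cD_\ell}[T(\by)=1\mid\Gamma_\ell(\by)=1]=\Pr_{\by\sim\cU(\Delta^1_n)}[T_\cM(\by)=1]$ via the explicit binomial ratio $\prod_i\binom{\ell-b_i}{\lceil\ell/2\rceil-b_i}/\binom{\ell}{\lceil\ell/2\rceil}\le 2^{-|T_\cM|/2}$; since this only controls the $\Gamma_\ell=1$ half, it then splits $\dist_{\cD_\ell}(\Gamma_\ell,F')$ as a $\tfrac12,\tfrac12$ mixture over $\Gamma_\ell=1$ and $\Gamma_\ell=0$ and uses the monotonicity of term deletion, $F'\le F$ pointwise, to argue that the error on the $\Gamma_\ell=0$ side does not grow. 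You instead observe that the per-block marginal bound $\max\left(\lceil\ell/2\rceil/\ell,\lfloor\ell/2\rfloor/\ell\right)\le 3/5$ and the hypergeometric negative-dependence argument hold for both slices $\Delta^0$ and $\Delta^1$, so that $\Pr_{\by\sim\cD_\ell}[T(\by)=1]\le(3/5)^{|T_\cM|}$ over the full mixture; this lets you finish with the plain triangle inequality $\dist_{\cD_\ell}(\Gamma_\ell,F')\le\dist_{\cD_\ell}(\Gamma_\ell,F)+\Pr_{\cD_\ell}[F\ne F']$ and no conditional bookkeeping. Your route is somewhat cleaner and the constant $(3/5)^m<2^{-m/2}$ is even slightly sharper; the paper's conditional form of the estimate is what gets reused in Claim~\ref{jkl}, which likely explains its choice of setup, but for this claim alone your argument is equally valid.
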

\begin{proof} Let $T$ be a term of monotone size at least $\opt(\cS)\ell/5$. Let $b_i$ denote the number of unnegated variables of $T$ of the form $y_{i,j}$ and let $T_i$ be their conjunction. Then $T_\cM=\wedge_{i=1}^n T_i$ and $\sum_{i=1}^nb_i=|T_\cM|\ge \opt(\cS)\ell/5$. If, for some $i$, $b_i>\lceil\ell/2\rceil$, then the term $T_i$ is zero on all $\Delta^0\cup\Delta^1$, and therefore, $T$ is zero on all $\Delta^0_n\cup\Delta^1_n$. Thus, it can be just removed from $F$. So, we may assume that $b_i\le \lceil\ell/2\rceil$ for all $i$. First,
\begin{eqnarray}
\Prr{\by\sim \cD_\ell}[T(\by)=1|\Gamma_\ell(\by)=1]
&=&\Prr{\by\sim \cU({\Delta^1_n})}[T(\by)=1]\le\Prr{\by\sim \cU({\Delta^1_n})}[T_\cM(\by)=1] \nonumber\\
&=&\prod_{i=1}^n\Prr{\by_i\sim \cU({\Delta^1})}[T_i(\by_i)=1] \nonumber\\
&=&\prod_{i=1}^n\frac{{\ell-b_i\choose \lceil\ell/2\rceil-b_i}}{{\ell\choose \lceil\ell/2\rceil}}\nonumber\\
&=&\prod_{i=1}^n \left(1-\frac{b_i}{\ell}\right)\left(1-\frac{b_i}{\ell-1}\right)\cdots \left(1-\frac{b_i}{\lceil\ell/2\rceil+1}\right)\nonumber\\
&\le& \prod_{i=1}^n \prod_{j=\lceil\ell/2\rceil+1}^\ell \exp({-b_i/j})=\prod_{i=1}^n  \exp\left({-b_i\sum_{j=\lceil\ell/2\rceil+1}^\ell 1/j}\right)\nonumber\\
&=&\exp\left({-|T_\cM|\sum_{j=\lceil\ell/2\rceil+1}^\ell 1/j}\right)
\le 2^{-|T_\cM|/2}\le 2^{-\opt(\cS)\ell/10}.\label{lstep}
\end{eqnarray}
Let $F'$ be the disjunction of all the terms in $F$ of monotone size at most $\opt(\cS)\ell/5$. Let $T^{(1)},\ldots,T^{(m)}$ be all the terms of monotone size greater than $\opt(\cS)\ell/5$ in $F$. Then, by (\ref{lstep}) and the union bound,
\begin{eqnarray}\label{kji}
\Prr{\by\sim\cD_\ell}[F(\by)\not=F'(\by)|\Gamma_\ell(\by)=1]&\le& \Prr{\by\sim\cD_\ell}[\vee_{i=1}^m T^{(i)}(\by)=1|\Gamma_\ell(\by)=1]\nonumber\\ &\le& 2^{-\opt(\cS)\ell/10}m
\le 2^{-\opt(\cS)\ell/20}.
\end{eqnarray}
and (Here we abbreviate $F'(\by),F(\by)$ and $\Gamma_\ell(\by)$ by $F',F$ and $\Gamma_\ell$)
\begin{eqnarray}
\dist_{\cD_\ell}(\Gamma_\ell,F')&=& \Prr{\by\sim\cD_\ell}[F'\not=\Gamma_\ell]\nonumber\\
&=&\frac{1}{2}\Prr{\by\sim\cD_\ell}[F'\not=\Gamma_\ell|\Gamma_\ell=1]+\frac{1}{2}\Prr{\by\sim\cD_\ell}[F'\not=\Gamma_\ell|\Gamma_\ell=0]\label{kkk1}\\
&=&\frac{1}{2}\Prr{\by\sim\cD_\ell}[F'\not=F|\Gamma_\ell=1]+\nonumber\\ &&\frac{1}{2}\Prr{\by\sim\cD_\ell}[F\not=\Gamma_\ell|\Gamma_\ell=1]+\frac{1}{2}\Prr{\by\sim\cD_\ell}[F'\not=\Gamma_\ell|\Gamma_\ell=0]\label{kkk2}\\
&\le& 2^{-\opt(\cS)\ell/20}+\frac{1}{2}\Prr{\by\sim\cD_\ell}[F\not=\Gamma_\ell|\Gamma_\ell=1]+\frac{1}{2}\Prr{\by\sim\cD_\ell}[F\not=\Gamma_\ell|\Gamma_\ell=0]\label{kkk3}\\
&=&2^{-\opt(\cS)\ell/20}+\dist_{\cD_\ell}(\Gamma_\ell,F).\nonumber
\end{eqnarray}
In (\ref{kkk1}), we used Fact~\ref{halff}. In~(\ref{kkk2}), we used the probability triangle inequality. In~(\ref{kkk3}), we used (\ref{kji}) and the fact that if $F'(\by)\not=0$, then $F(\by)\not=0$. 
\end{proof}

We now prove
\begin{claim}\label{coor}
Let $z\in \Delta_n^1$. Let $F$ be a DNF with terms of monotone size at most $\lceil\ell/2\rceil(\opt(\cS)-1)/2$ that satisfies $F(z)=1$. Then 
$$\Prr{\bj\sim \cU(\one(z)),\by\sim\cU(z^{\bj\gets U})} [F(\by)=1]\ge \frac{1}{2|U|}.$$
\end{claim}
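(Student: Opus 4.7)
The plan is to pick any single term $T$ of $F$ with $T(z)=1$ (such a $T$ exists because $F(z)=1$) and to show $\Prr{\bj,\by}[T(\by)=1]\ge 1/(2|U|)$; the conclusion for $F$ then follows from $F\ge T$. Fact~\ref{Factz}(1)--(2) makes $u\mapsto z^{j\gets u}$ a bijection between $U$ and $z^{j\gets U}$, so sampling $\by\sim\cU(z^{\bj\gets U})$ is equivalent to sampling $\bu\sim\cU(U)$ and taking $\by=z^{\bj\gets\bu}$. Then Fact~\ref{Factz}(3) gives $T(z^{\bj\gets \bu})=T^{\bj}(\bu)$, where $T^j$ is the conjunction of all literals of $T$ on variables of the form $y_{i,j_i}$. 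Thus it suffices to lower bound $\Prr{\bj\sim\cU(\one(z)),\,\bu\sim\cU(U)}[T^{\bj}(\bu)=1]$.

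The key observation is that for every $j\in\one(z)$ the projection $T^j$ consists purely of \emph{unnegated} variables. Indeed, $z_{i,j_i}=1$ for every $i$, so a negated literal $\neg y_{i,j_i}$ in $T$ would force $T(z)=0$, contradicting the choice of $T$. Hence $T^j(u)=\bigwedge_{i\in I_j}u_i$ where $I_j=\{i:y_{i,j_i}\text{ is a positive literal of }T\}$. Let $b_i$ denote the number of positive literals of $T$ on variables $y_{i,1},\dots,y_{i,\ell}$; since $T(z)=1$ all these literals' second indices lie in $\{k:z_{i,k}=1\}$, a $\lceil\ell/2\rceil$-element set over which $j_i$ is uniform, so $\Prr{\bj}[i\in I_{\bj}]=b_i/\lceil\ell/2\rceil$. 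Linearity of expectation then yields $\E[|I_{\bj}|]=|T_\cM|/\lceil\ell/2\rceil\le(\opt(\cS)-1)/2$ by the monotone-size hypothesis on $T$.

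Markov's inequality then gives $\Prr{\bj}[|I_{\bj}|\ge\opt(\cS)]\le(\opt(\cS)-1)/(2\opt(\cS))<1/2$, hence $\Prr{\bj}[|I_{\bj}|<\opt(\cS)]>1/2$. For any such $\bj$, applying Fact~\ref{factopt}(2) to the monotone term $\bigwedge_{i\in I_{\bj}}x_i$ of size less than $\opt(\cS)$ produces a $u\in U$ with $T^{\bj}(u)=1$, whence $\Prr{\bu\sim\cU(U)}[T^{\bj}(\bu)=1]\ge 1/|U|$. Putting these together,
\[
\Prr{\bj,\bu}[T^{\bj}(\bu)=1]\ \ge\ \Prr{\bj}[|I_{\bj}|<\opt(\cS)]\cdot\frac{1}{|U|}\ >\ \frac{1}{2|U|},
\]
which is the claimed bound. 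The only genuinely subtle point is the observation that negated literals of $T$ cannot survive in $T^j$ once $j\in\one(z)$; after that, Fact~\ref{factopt}(2) applies directly and the rest is an unremarkable Markov calculation.
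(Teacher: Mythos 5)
Your proof is correct and follows essentially the same route as the paper: reduce to a single term $T$ with $T(z)=1$, project to $T^j$ using Fact~\ref{Factz}(3), observe that $T^j$ is purely positive when $j\in\one(z)$, compute the expected number of surviving variables as $|T_\cM|/\lceil\ell/2\rceil$, apply Markov, and invoke Fact~\ref{factopt}(2). The only cosmetic difference is that you bound $\E[|I_{\bj}|]$ by linearity of expectation over the blocks $i$, while the paper counts the total contribution $\lceil\ell/2\rceil^{n-1}$ of each monotone literal to $\sum_{j}q(j)$, but these are the same calculation.
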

\begin{proof} Since $F(z)=1$, there is a term $T$ in $F$ that satisfies $T(z)=1$. Let $Y_0=\{y_{i,m}|z_{i,m}=0\}$ and $Y_1=\{y_{i,m}|z_{i,m}=1\}$. Since $T(z)=1$, every variable in $Y_0$ that appears in $T$ must be negated, and every variable in $Y_1$ that appears in $T$ must be unnegated. 
For $j\in \one(z)$, define $q(j)$ to be the number of variables in $\{y_{1,j_1},\ldots,y_{n,j_n}\}$ that appear in $T(y)$. All those variables appear unnegated in $T$ because $j\in \one(z)$. Recall that $T_\cM$ is the conjunction of all unnegated variables in $T$. Then $|T_\cM|\le \lceil\ell/2\rceil(\opt(\cS)-1)/2$. Each variable in $T_\cM$ contributes $\lceil\ell/2\rceil^{n-1}$ to the sum $\sum_{j\in one(z)}q(j)$ and $|\one(z)|=\lceil\ell/2\rceil^n$. Therefore,
$$\dE{\bj\sim \cU(\one(z))}[q(\bj)]=\frac{|T_\cM|}{\lceil \ell/2\rceil}\le\frac{\opt(\cS)-1}{2}.$$
By Markov's bound, at least half the elements $j\in \one(z)$ satisfies $q(j)\le \opt(\cS)-1$. Let $J=\{j\in\one(z)|q(j)<\opt(\cS)\}$. Then $\Pr_{\bj\sim\cU(\one(z))}[\bj\in J]\ge 1/2$. Consider $j\in J$ and let $T^j$ be the conjunction of all the variables that appear in $T$ of the form $y_{i,j_i}$. Then $|T^j|=q(j)\le \opt(\cS)-1$. 
By Fact~\ref{factopt}, there is $u\in U$ such that $T^j(u)=1$.
By Fact~\ref{Factz}, we have  $T(z^{j\gets u})=T^j(u)=1$. Then $F(z^{j\gets u})=1$. Since by item~\ref{Factz1} in Fact~\ref{Factz}, $|z^{j\gets U}|=|U|$, we have $$\Pr_{\bj\sim\cU(one(z)),\by\sim\cU(z^{\bj\gets U})}[F(\by)=1|\bj\in J]\ge \frac{1}{|U|}.$$ 
Therefore,
$$\Prr{\bj\sim \cU(\one(z)),\by\sim\cU(z^{\bj\gets U})} [F(\by)=1]\ge  \Prr{\bj\sim\cU(\one(z))}[\bj\in J]\cdot \Prr{\bj\sim\cU(\one(z)),\by\sim\cU(z^{j\gets U})}[F(\by)=1|\bj\in J]\ge \frac{1}{2|U|}.$$
\end{proof}
We are now ready to prove

\addtocounter{lemma}{-1}
\begin{lemma}
Let $\cS=(S,U,E)$ be a set cover instance, and let $\ell\ge 5$. If $F:(\{0,1\}^\ell)^n\to \{0,1\}$ is a DNF of size $|F|<2^{\opt(\cS)\ell/20}$, then $\dist_{\cD_\ell}(F,\Gamma_\ell)\ge 1/(8|U|)-2^{-\opt(\cS)\ell/20}$.
\end{lemma}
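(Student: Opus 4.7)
The plan is to combine Claim~\ref{Truncate} with Claim~\ref{coor} through a clean two-sided error decomposition. First I would apply Claim~\ref{Truncate} to replace $F$ by a DNF $F'$ of size at most $2^{\opt(\cS)\ell/20}$ in which every term has monotone size at most $\opt(\cS)\ell/5$, at the expense of at most $2^{-\opt(\cS)\ell/20}$ in the distance to $\Gamma_\ell$. It therefore suffices to show $\dist_{\cD_\ell}(F',\Gamma_\ell)\ge 1/(8|U|)$, after which the lemma follows by adding back the loss. Before invoking Claim~\ref{coor} on $F'$ one checks that $\opt(\cS)\ell/5\le \lceil\ell/2\rceil(\opt(\cS)-1)/2$, which holds whenever $\opt(\cS)$ exceeds a small absolute constant --- automatic in the parameter regime of Lemma~\ref{Lin}.

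The second step is to split the error using Fact~\ref{halff}. Let $\alpha:=\Prr{\bw\sim\cU(\Delta_n^1)}[F'(\bw)=1]$ denote the fraction of positive inputs that $F'$ classifies correctly. Then
\[
\dist_{\cD_\ell}(F',\Gamma_\ell)\;=\;\tfrac{1}{2}(1-\alpha)\;+\;\tfrac{1}{2}\Prr{\by\sim\cU(\Delta_n^0)}\bigl[F'(\by)=1\bigr],
\]
so everything reduces to lower-bounding the false-positive rate on $\Delta_n^0$ in terms of $\alpha$.

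For this, I would use Fact~\ref{sample} to realize $\by\sim\cU(\Delta_n^0)$ as the joint draw $\bw\sim\cU(\Delta_n^1)$, $\bj\sim\cU(\one(\bw))$, $\by\sim\cU(\bw^{\bj\gets U})$. Conditioning on $\bw=w$ with $F'(w)=1$, Claim~\ref{coor} gives $\Prr{\bj,\by}[F'(\by)=1]\ge 1/(2|U|)$, so averaging over $\bw$ yields $\Prr{\by\sim\cU(\Delta_n^0)}[F'(\by)=1]\ge \alpha/(2|U|)$ and hence
\[
\dist_{\cD_\ell}(F',\Gamma_\ell)\;\ge\;\frac{1-\alpha}{2}+\frac{\alpha}{4|U|}.
\]
A one-line case analysis finishes the argument: if $\alpha\le 1/2$ the first term alone is at least $1/4\ge 1/(8|U|)$, while if $\alpha\ge 1/2$ the second term alone is at least $1/(8|U|)$. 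Either way $\dist_{\cD_\ell}(F',\Gamma_\ell)\ge 1/(8|U|)$, and combining with Claim~\ref{Truncate} gives $\dist_{\cD_\ell}(F,\Gamma_\ell)\ge 1/(8|U|)-2^{-\opt(\cS)\ell/20}$.

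The main obstacle has already been dispatched inside Claims~\ref{Truncate} and~\ref{coor}: Claim~\ref{Truncate} discards terms of large monotone size because they are rare on $\Delta_n^1$, and Claim~\ref{coor} exploits monotonicity to turn each accidental $F'=1$ on $\Delta_n^1$ into a uniformly large cluster of $F'=1$ witnesses on $\Delta_n^0$. Given those two ingredients, the lemma is simply a balancing act between the false-negative rate $(1-\alpha)/2$ on $\Delta_n^1$ and the amplified false-positive rate $\alpha/(4|U|)$ on $\Delta_n^0$; the only real bookkeeping is confirming that the monotone-size bound handed off from Claim~\ref{Truncate} respects the hypothesis of Claim~\ref{coor}.
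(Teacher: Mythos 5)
Your proof is correct and follows essentially the same route as the paper's: both pass to $F'$ via Claim~\ref{Truncate}, decompose the error with Fact~\ref{halff}, and use Fact~\ref{sample} together with Claim~\ref{coor} to lower-bound the false-positive mass on $\Delta_n^0$, closing with a two-case balance between the false-negative rate on $\Delta_n^1$ and the amplified false-positive rate on $\Delta_n^0$ (you derive a single unified lower bound $\tfrac{1-\alpha}{2}+\tfrac{\alpha}{4|U|}$ and split cases at the end, whereas the paper splits on whether $\Pr[F'\ne 1]\ge 1/(4|U|)$ at the outset, but these are the same computation). Your explicit check that the monotone-size bound $\opt(\cS)\ell/5$ handed off by Claim~\ref{Truncate} sits below the threshold $\lceil\ell/2\rceil(\opt(\cS)-1)/2$ needed by Claim~\ref{coor} is a worthwhile observation that the paper's proof leaves implicit.
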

\begin{proof} By Claim~\ref{Truncate}, there is a DNF, $F'$, of size $|F'|\le2^{\opt(\cS)\ell/20}$ with terms of monotone size at most $\opt(\cS)\ell/5$ such that $\dist_{\cD_\ell}(\Gamma_\ell,F')\le \dist_{\cD_\ell}(\Gamma_\ell,F)+2^{-\opt(\cS)\ell/20}$. Therefore, it is enough to prove that $\dist_{\cD_\ell}(\Gamma_\ell,F')\ge 1/(8|U|)$.

If $\Prr{\by\sim\cU(\Delta_n^1)}[F'(\by)\not=1]\ge 1/(4|U|)$, then by Fact~\ref{halff}, we have 
$$\dist_{\cD_\ell}(\Gamma_\ell,F')\ge \Prr{\by\sim\cD_\ell}[\Gamma_\ell(\by)\not=F'(\by)|\Gamma_\ell(\by)=1]\Prr{\by\sim \cD_\ell}[\Gamma_\ell(\by)=1]=\frac{1}{2}\Prr{\by\sim \cU(\Delta^1_n)}[F'(\by)\not=1]\ge \frac{1}{8|U|}.$$
If $\Prr{\by\sim\cU(\Delta_n^1)}[F'(\by)\not=1]< 1/(4|U|)$, then by Fact~\ref{halff} and~\ref{sample}, and Claim~\ref{coor}, 
\begin{eqnarray*}
\dist_{\cD_\ell}(\Gamma_\ell,F')&\ge& \Prr{\by\sim\cD_\ell}[\Gamma_\ell(\by)\not=F'(\by)|\Gamma_\ell(\by)=0]\Prr{\by\sim\cD_\ell}[\Gamma_\ell(\by)=0]\\
&=&\frac{1}{2}\Prr{\by\sim\cU(\Delta_n^0)}[F'(\by)=1]\\
&=&\frac{1}{2}\Prr{\bz\sim\cU(\Delta^1_n), \bj\sim\cU(\one(\bz)),\by\sim \cU(\bz^{\bj\gets U})}[F'(\by)=1]\\
&\ge&\frac{1}{2}\Prr{\bz\sim\cU(\Delta^1_n), \bj\sim\cU(\one(\bz)),\by\sim \cU(\bz^{\bj\gets U})}[F'(\by)=1|F'(\bz)=1]\cdot \Prr{\bz\sim\cU(\Delta^1_n)}[F'(\bz)=1]\\
&\ge&\frac{1}{2}\frac{1}{2|U|}\left(1-\frac{1}{4|U|}\right)\ge \frac{1}{8|U|}.
\end{eqnarray*}

\end{proof}

\section{Superpolynomial Lower Bound}
\correct

In this section, we prove the first results of the paper. First, we prove the following result for {{\textsc {Monotone $(\log n)$-Junta}}}.
\begin{lemma}\label{Th01}
Assuming randomized ETH, there is a constant $c$ such that any PAC learning algorithm for $n$-variable {{\textsc {Monotone $(\log n)$-Junta}}} by {\textsc {DNF}} with $\epsilon=1/(16n)$ must take at least 
$$n^{c\frac{\log\log n}{\log\log\log n}}$$ time.

The lower bound holds, even if the learner knows the distribution, can draw a sample according to the distribution in polynomial time and can compute the target on all the points of the support of the distribution in polynomial time.
\end{lemma}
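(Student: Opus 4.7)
The plan is to reduce from the hard $(k,k')$-\SC instances of Lemma~\ref{Lin} to the hypothetical PAC-learning algorithm, using the construction $(\Gamma^{\cS}_\ell,\cD^{\cS}_\ell)$ from Section~2. Given a set-cover instance $\cS=(S,U,E)$ on $N$ vertices with $|S|=n$, set $\ell=\lfloor\log n/k\rfloor$ and write $M:=n\ell$ for the number of Boolean variables of $\Gamma^{\cS}_\ell$; this $M$ plays the role of the ``$n$'' in the statement of Lemma~\ref{Th01}. By standard padding of the set side we may assume $|U|\le n\le M$, which preserves $\opt(\cS)$. By Fact~\ref{sample}, $\cD^{\cS}_\ell$ is samplable in polynomial time, and $\Gamma^{\cS}_\ell$ is evaluable on any support point in polynomial time, so the strong access requirements in the statement of the lemma are met.

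Suppose for contradiction that an algorithm $\cA$ PAC-learns $M$-variable \textsc{Monotone} $(\log M)$-\textsc{Junta} by \textsc{DNF} in time $T:=M^{c\log\log M/\log\log\log M}$ with error $\epsilon=1/(16M)$. We run $\cA$ on target $\Gamma^{\cS}_\ell$ under distribution $\cD^{\cS}_\ell$. In the \Yes case $\opt(\cS)\le k$, Fact~\ref{oell} makes $\Gamma^{\cS}_\ell$ a \textsc{Monotone} $(k\ell)$-\textsc{Junta} over $\cD^{\cS}_\ell$, and $k\ell\le\log n\le\log M$ by our choice of $\ell$, so $\cA$'s guarantee applies: with probability at least $2/3$, $\cA$ returns a DNF $h$ with $\dist_{\cD_\ell}(h,\Gamma^{\cS}_\ell)\le 1/(16M)$. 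In the \No case $\opt(\cS)>k'$, the output has at most $T$ terms, and provided $T<2^{k'\ell/20}$, Lemma~\ref{error} forces $\dist_{\cD_\ell}(h,\Gamma^{\cS}_\ell)\ge 1/(8|U|)-2^{-k'\ell/20}$, which under $|U|\le n$ and $\ell$ sufficiently large exceeds $2\epsilon$.

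The reduction is completed by empirically estimating $\dist_{\cD_\ell}(h,\Gamma^{\cS}_\ell)$ from $\mathrm{poly}(M)$ independent samples of $\cD^{\cS}_\ell$, possible since both $h$ and $\Gamma^{\cS}_\ell$ are evaluable in polynomial time; a Chernoff bound gives additive accuracy $O(1/M)$, sufficient to distinguish the two cases with probability $\ge 2/3$. This yields a randomized $(k,k')$-\SC algorithm of total running time $T+\mathrm{poly}(M,N)$. Plugging in the parameters $k=\frac{1}{2}\log\log N/\log\log\log N$ and $k'=\frac{1}{2}(\log N/\log\log N)^{1/k}$ from Lemma~\ref{Lin}, and using $\log M=\Theta(\log N)$, a direct calculation shows that for a sufficiently small constant $c>0$ both (i) $T<2^{k'\ell/20}$ — because $k'/(20k)$ far exceeds $c\log\log N/\log\log\log N$ — and (ii) $T+\mathrm{poly}(M,N)<N^{\lambda k}$, contradicting Lemma~\ref{Lin}.

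The only real work is parameter bookkeeping: one must choose $\ell$ and pad $\cS$ so that the four constraints $k\ell\le\log M$, $T<2^{k'\ell/20}$, $1/(8|U|)-2^{-k'\ell/20}>2\epsilon$, and $T+\mathrm{poly}(M,N)<N^{\lambda k}$ hold simultaneously. Each inequality is elementary given Lemma~\ref{Lin} and our Lemma~\ref{error}; the overall skeleton is exactly parallel to the argument of Koch, Strassle, and Tan outlined in Section~\ref{PrevT}, with their $\xor$-gap-amplification step replaced by the majority-blowup $\Gamma^{\cS}_\ell$ whose hardness property against small DNFs is precisely the content of Lemma~\ref{error}.
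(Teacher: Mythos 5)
Your proposal is correct and follows essentially the same route as the paper's proof: reduce from the hard $(k,k')$-\textsc{Set-Cover} instances of Lemma~\ref{Lin} to the hypothetical learner by feeding it $(\Gamma_\ell,\cD_\ell)$, use Fact~\ref{oell} for the \Yes case and Lemma~\ref{error} for the \No case, estimate the distance empirically, and verify the parameter constraints $k\ell\le\log(\cdot)$, $T<2^{k'\ell/20}$, and $T<N^{\lambda k}$. The only substantive difference is one of bookkeeping: you explicitly name the ambient variable count $M=n\ell$ of $\Gamma_\ell$ and pad so $|U|\le n\le M$, whereas the paper conflates $n$, $N$, and $n\ell$ (it writes the learner's running time as $N^{c\log\log N/\log\log\log N}$ rather than $(n\ell)^{c\log\log(n\ell)/\log\log\log(n\ell)}$, and compares the \No-case error $1/(8|U|)$ directly against $1/(16N)$). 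Your version is arguably cleaner on this point, and your rougher parameter check ($k'/(20k)\gg c\log\log N/\log\log\log N$) matches the paper's explicit calculation that $k'>80ck^2$. Two small technicalities you should record explicitly: $\ell$ must be odd for $\Gamma_\ell$ to be defined (round to the nearest odd integer; both you and the paper leave this implicit), and the \No-case margin over $2\epsilon$ relies on $\ell\ge 3$ together with the padding so that $1/(8|U|)\ge\ell/(8M)$, which you invoke but do not spell out.
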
 
\begin{proof} Consider the constant $\lambda$ in Lemma~\ref{Lin}. Let $c=\min(1/40,\lambda/4)$.
Suppose there is a PAC learning algorithm $\cA$ for {{\textsc {Monotone $(\log n)$-Junta}}} by {\textsc {DNF}} with $\epsilon=1/(16n)$ that runs in time $n^{c\frac{\log\log n}{\log\log\log n}}$. We show that there is $k$ such that for
$$k'=\frac{1}{2}\left(\frac{\log N}{\log\log N}\right)^{1/k},$$ $(k,k')$-\SC
can be solved in time $N^{4ck}\le N^{\lambda k}$. By Lemma~\ref{Lin}, the result then follows.

Let $\cS=(S,U,E)$ be an $N$-vertex $(k,k')$-\SC instance where 
$$k=\frac{1}{2}\frac{\log\log N}{\log\log\log N} \mbox{\ and\ } k'=\frac{1}{2}\left(\frac{\log N}{\log\log N}\right)^{1/k}.$$
Let $$\ell=\frac{\log N}{k}$$ and consider $\Gamma_\ell$ and $\cD_\ell$. 

Consider the following algorithm $\cB$
\begin{enumerate}
\item Input $\cS=(S,U,E)$ an instance for $(k,k')$-\SC.
\item Construct $\Gamma_\ell$ and $\cD_\ell$.
\item Run $\cA$ using $\Gamma_\ell$ and $\cD_\ell$.  If it runs more than $N^{4c k}$ steps, then output \No.
\item Let $F$ be the output DNF.
\item Estimate $\eta=\dist_{\cD_\ell}(F,\Gamma_\ell)$.
\item If $\eta\le \frac{1}{16N}$, output \Yes, otherwise output \No.
\end{enumerate}
The running time of this algorithm is $N^{4c k}\le N^{\lambda k}$. Therefore, it is enough to prove the following
\begin{claim}
Algorithm $\cB$ solves $(k,k')$-\SC.
\end{claim}
\begin{proof}
\Yes case: Let $\cS=(S,U,E)$ be a $(k,k')$-\SC instance and $\opt(\cS)\le k$. Then, $\opt(\cS)\cdot \ell\le k\ell=\log N$, and by Fact~\ref{oell}, $\Gamma_\ell$ is \textsc{Monotone $\log N$-Junta}. Therefore, w.h.p., algorithm $\cA$ learns $\Gamma_\ell$ and outputs a DNF that is $\eta=1/(16N)$ close to the target with respect to $\cD_\ell$. Since $\cB$ terminates $\cA$ after $N^{4ck}$ time, we only need to prove that $\cA$ runs at most $N^{4c k}$ time.

The running time of $\cA$ is
$$N^{c\frac{\log\log N}{\log\log\log N}}<N^{4ck}.$$

\No Case: Let $\cS=(S,U,E)$ be a $(k,k')$-\SC instance and $\opt(\cS)> k'$. By Lemma~\ref{error}, any DNF, $F$, of size $|F|<2^{k'\ell/20}$ satisfies $\dist_{\cD_\ell}(F,\Gamma_\ell)\ge 1/(8|U|)-2^{-k'\ell/20}$.
First, we have
$$(2k)^{2k}=\left(\frac{\log\log N}{\log\log\log N}\right)^\frac{\log\log N}{\log\log\log N}<\frac{\log N}{\log\log N}.$$
Therefore, since $c\le 1/40$,
$$k'=\frac{1}{2}\left(\frac{\log N}{\log\log N}\right)^{1/k}>\frac{1}{2}(2k)^2>80c k^2.$$ 
So $k'\ell/20>(k\ell)(4ck)$ and
$$2^{k'\ell/20}>(2^{k\ell})^{4c k}=N^{4c k}.$$
Now since the algorithm runs in time $N^{4ck}$, it cannot output a DNF $F$ of size more than $N^{4ck}<2^{k'\ell/20}$, and by Lemma~\ref{error}, 
$$\dist_{\cD_\ell}(F,\Gamma_\ell)\ge \frac{1}{8|U|}-\frac{1}{N^{4ck}}\ge\frac{1}{9N}.$$
So it either runs more than $N^{4ck}$ steps and then outputs \No in step 3 or outputs a DNF with an error greater than $1/(9N)>1/(16N)$ and outputs \No in step 6.
\end{proof}
Notice that the learning algorithm knows $\Gamma_\ell$ and $\cD_\ell$. It is also clear from the definition of $\Gamma_\ell$ and $\cD_\ell$ that the learning algorithm can draw a sample according to the distribution $\cD_\ell$ in polynomial time and can compute the target $\Gamma_\ell$ on all the points of the support of the distribution in polynomial time.
\end{proof}

We now prove
\begin{theorem}\label{Th2}
Assuming randomized ETH, there is a constant $c$ such that any PAC learning algorithm for $n$-variable \mbox{\textsc{{\it size}-$s$ Monotone DT}  {\it and size}-$s$ \textsc{Monotone DNF}} by {\textsc {DNF}} with $\epsilon=1/(16n)$ must take at least 
$$n^{c\frac{\log\log s}{\log\log\log s}}$$ time.

The lower bound holds, even if the learner knows the distribution, can draw a sample according to the distribution in polynomial time and can compute the target on all the points of the support of the distribution in polynomial time.
\end{theorem}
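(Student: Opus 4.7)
The plan is to deduce Theorem~\ref{Th2} from Lemma~\ref{Th01} by combining the containment~(\ref{Mon}) with an $s$-parametrized version of the same construction.

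By~(\ref{Mon}), Monotone~$(\log s)$-Junta $\subseteq$ size-$s$ Monotone DT $\subseteq$ size-$s$ Monotone DNF, so any DNF learner for either of the larger classes is automatically a DNF learner for Monotone~$(\log s)$-Junta. Hence it suffices to establish the stated lower bound for PAC learning Monotone~$(\log s)$-Junta on $n$ variables by DNF, which is exactly the generalization of Lemma~\ref{Th01} in which the role of $\log n$ is taken by $\log s$.

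Concretely, I would repeat the proof of Lemma~\ref{Th01} with the parameter $\ell$ rescaled in terms of $s$ rather than $N$. Starting from a $(k,k')$-\SC instance on $N$ vertices produced by Lemma~\ref{Lin}, set $\ell = \log s / k$ in place of $\log N / k$. Then $\opt(\cS)\cdot \ell \le k\ell = \log s$, and Fact~\ref{oell} says $\Gamma_\ell$ is a Monotone~$(\log s)$-Junta on $|S|\cdot \ell$ variables; padding with dummy variables produces an $n$-variable instance still in the class. Algorithm~$\cB$ of Lemma~\ref{Th01}, together with the far-from-small-DNF bound from Lemma~\ref{error}, then reduces a hypothetical PAC learner of running time $n^{c \log\log s/\log\log\log s}$ to a $(k,k')$-\SC algorithm of running time at most $N^{\lambda k}$, contradicting Lemma~\ref{Lin}.

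The main obstacle is the arithmetic bookkeeping, where the substitution $\log N \mapsto \log s$ has to be tracked through every estimate. One must verify that the three key inequalities implicit in Lemma~\ref{Th01}---that the learner's time $n^{c\log\log s/\log\log\log s}$ fits inside $N^{4ck}$; that $N^{4ck} < 2^{k'\ell/20}$ so that Lemma~\ref{error} forces an error of at least $1/(9N) > 1/(16n)$ in the \No~case; and that $N^{4ck} \le N^{\lambda k}$---continue to hold for some constant $c$ after the substitution. Each one reduces to the corresponding calculation in Lemma~\ref{Th01} with $\log N$ replaced by $\log s$, and by choosing $N$ comparable to $s$ (with sufficient padding to reach $n$ variables) they all go through for a suitable constant of the form $c = \min(1/40,\lambda/4)$.
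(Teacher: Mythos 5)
Your proposal is correct in outline but takes a longer route than the paper does. The paper's proof of Theorem~\ref{Th2} is a one-liner: since Lemma~\ref{Th01} already gives the lower bound $n^{c\frac{\log\log n}{\log\log\log n}}$ for learning $n$-variable \textsc{Monotone} $(\log n)$-\textsc{Junta} by DNF, and since by~(\ref{Mon}) \textsc{Monotone} $(\log n)$-\textsc{Junta} $\subset$ size-$n$ \textsc{Monotone DT} $\subset$ size-$n$ \textsc{Monotone DNF}, any learner for the latter two classes (with $s=n$) is in particular a learner for \textsc{Monotone} $(\log n)$-\textsc{Junta} and therefore inherits the same lower bound, which then reads $n^{c\frac{\log\log s}{\log\log\log s}}$ because $s=n$. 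No re-running of the set cover reduction with $\ell = \log s/k$, no padding, and no re-verification of the arithmetic inequalities is needed.

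You correctly identified (\ref{Mon}) as the key tool and that it suffices to prove the bound for \textsc{Monotone} $(\log s)$-\textsc{Junta}, but you then proposed to re-derive an $s$-parametrized version of Lemma~\ref{Th01} from scratch instead of noticing that the lemma already suffices once one instantiates $s=n$. Your more ambitious version would in principle give a statement for a wider range of $s$, but it introduces genuine technical burden: the Lin parameters $k,k'$ are tied to $N$ while your $\ell$ would depend on $s$, so the No-case inequality $N^{4ck} < 2^{k'\ell/20}$ now requires $s$ to be at least a polynomial in $N$, and the ``padding with dummy variables'' step, while plausible, needs a concrete definition of the padded distribution and a check that the error parameter $\epsilon = 1/(16n)$ (which now refers to the padded $n$) is still compatible with the bound $1/(8|U|) - 2^{-\opt(\cS)\ell/20}$ from Lemma~\ref{error}. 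None of this bookkeeping is necessary for the theorem as stated, and your own closing remark (``choosing $N$ comparable to $s$'') is precisely the paper's $s=n$ specialization in disguise.
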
 
\begin{proof}
By Lemma~\ref{Th01}, assuming randomized ETH, there is a constant $c$ such that any PAC learning algorithm for $n$-variable {{\textsc {Monotone $(\log n)$-Junta}}} by {\textsc {DNF}} with $\epsilon=1/(16n)$ runs in time
$$n^{c\frac{\log\log n}{\log\log\log n}}.$$ Now by (\ref{Mon}) and since $s=n$, the result follows.
\end{proof}

\section{Tight Bound Assuming some Conjecture}\label{LLast}
A plausible conjecture on the hardness of \SC is the following.
\begin{conjecture}\label{conj1}\cite{KochST} 
There are constants $\alpha,\beta,\lambda\in(0,1)$ such that, for $k<N^\alpha$, there is no randomized $N^{\lambda k}$ time algorithm that can solve $\left(k,(1-\beta)\cdot k\ln N\right)$-\SC on $N$ vertices with high probability. 
\end{conjecture}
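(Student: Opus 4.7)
The plan is to prove this parameterized hardness conjecture by adapting the template of tight polynomial-time hardness-of-approximation results for Set Cover (Feige, and later Dinur--Steurer) to the FPT regime where $k$ is a parameter. The starting point would be a gap version of a constraint satisfaction problem under randomized ETH: either a Label Cover instance obtained from Moshkovitz--Raz's almost-linear-size PCP (to avoid the polynomial blow-up incurred by standard parallel repetition) or a direct construction via Dinur's gap amplification, giving $M$-vertex instances on which distinguishing fully satisfiable from $(1-\delta)$-unsatisfiable instances requires $2^{\Omega(M)}$ randomized time. The parameter $k$ would be encoded as the number of ``blocks'' of variables whose assignments can simultaneously cover the proof.

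The second step is a Feige-style reduction using near-optimal partition systems: for each vertex of the Label Cover instance, attach a combinatorial partition system on a ground set of size $\mathrm{poly}(\log N)$ that cannot be covered by fewer than $(1-\beta)\ln N$ parts unless one of a small number of ``consistent'' partitions is included. Gluing these together over the Label Cover structure produces an instance $\cS$ on $N$ vertices such that YES-instances of the CSP admit covers of size exactly $k$, while NO-instances force $\opt(\cS)\ge (1-\beta)\cdot k\ln N$. Because the reduction is almost linear in $M$, the $M^{\Omega(k)}$ ETH lower bound transfers to $N^{\Omega(k)}$, yielding the desired $N^{\lambda k}$ lower bound for $(k,(1-\beta)k\ln N)$-\SC with $k$ up to $N^\alpha$ for some $\alpha>0$.

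The main obstacle is achieving the sharp constant $(1-\beta)$ in the soundness gap simultaneously with the wide parameter range $k<N^\alpha$ and the tight $N^{\lambda k}$ running-time bound. The $(1-o(1))\ln n$ hardness in the unparameterized setting required the heavy Dinur--Steurer machinery (combining multi-layered partition systems with a carefully calibrated two-prover PCP), and the analogous parameterized statement requires, in addition, either (i) a parameter-preserving parallel-repetition theorem whose soundness decays exponentially in the repetition count without inflating the instance size beyond $\mathrm{poly}(M)$, or (ii) a direct ``product'' PCP construction whose satisfying assignments naturally decompose into $k$ configurations. Both routes demand a delicate balance: the PCP size must stay linear enough to preserve the ETH bound, the partition systems must be optimal enough to give $(1-\beta)$ rather than any looser constant, and the parameter $k$ must be allowed to grow polynomially in $N$ without the gap degrading. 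It is precisely this simultaneous quantitative tightening---each ingredient of which is known individually in related settings but has not yet been combined---that keeps the statement as a conjecture rather than a theorem.
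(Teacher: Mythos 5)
This statement is a \emph{conjecture} in the paper (taken verbatim from Koch, Strassle, and Tan); the paper offers no proof of it and only uses it as a hypothesis for Theorem~\ref{TTh2}. So there is nothing in the paper to compare your argument against, and your text is not in fact a proof: it is a research plan whose every step is phrased as ``would be'' or ``would require,'' and you concede in your own final paragraph that the decisive combination of ingredients ``has not yet been combined.'' That concession is the gap. Concretely, the missing pieces are exactly the ones you name but do not supply: a gap-producing starting point whose hardness is $2^{\Omega(M)}$ under randomized ETH \emph{and} whose satisfying assignments decompose into $k$ blocks for $k$ as large as $N^{\alpha}$; a parameter-preserving amplification step that drives the soundness down without inflating the instance size (standard parallel repetition blows up the size polynomially in the repetition count, which destroys the $N^{\lambda k}$ bound); and partition systems calibrated tightly enough to yield the sharp $(1-\beta)k\ln N$ gap rather than a weaker one such as the $(\log N/\log\log N)^{1/k}$ gap that Lin's theorem (Lemma~\ref{Lin} in the paper) actually delivers. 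Known parameterized inapproximability results for \SC give either a much smaller gap with the right running-time bound, or a logarithmic-type gap only for constant or slowly growing $k$; none gives all three quantitative requirements simultaneously, which is precisely why the statement is stated as a conjecture. A correct response here is to recognize that the statement is an assumption, not a claim the paper establishes, and that no proof is currently available.
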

We now prove
\begin{theorem}\label{TTh2}
Assuming Conjecture~\ref{conj1}, there is a constant $c$ such that any PAC learning algorithm for $n$-variable \textsc{Monotone} $(\log s)$-\textsc{Junta}, \mbox{\textsc{{\it size}-$s$ Monotone DT}  {\it and size}-$s$ \textsc{Monotone DNF}} by {\textsc {DNF}} with $\epsilon=1/(16n)$ must take at least 
$$n^{c\log s}$$ time.

The lower bound holds, even if the learner knows the distribution, can draw a sample according to the distribution in polynomial time and can compute the target on all the points of the support of the distribution in polynomial time.
\end{theorem}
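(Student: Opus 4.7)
The plan is to follow the template of Theorem~\ref{Th2} / Lemma~\ref{Th01} almost verbatim, but substitute Conjecture~\ref{conj1} for Lemma~\ref{Lin}. The extra strength of Conjecture~\ref{conj1} — the $(k,k')$-\SC gap only needs to be a factor of $\ln N$ rather than super-polynomial — is precisely what promotes the $n^{\tilde{O}(\log\log s)}$ lower bound to the tight $n^{c\log s}$ one. The key change is that $\ell$ is now allowed to be a \emph{constant} rather than growing with $N$.

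Fix a small odd constant $\ell\ge 5$ (for concreteness, $\ell=5$). Given an $N$-vertex \SC instance $\cS$, set $k=\lfloor(\log s)/\ell\rfloor$ and $k'=(1-\beta)\,k\ln N$, where $\alpha,\beta,\lambda$ are the constants from Conjecture~\ref{conj1}; since $\log s$ is polynomial in the number of learning variables, $k<N^\alpha$ holds and the Conjecture applies. Suppose toward contradiction that some algorithm $\cA$ PAC-learns $n$-variable \textsc{Monotone} $(\log s)$-\textsc{Junta} by \textsc{DNF} with $\epsilon=1/(16n)$ in time $n^{c\log s}$, where $c$ is a small absolute constant to be pinned down. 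Build a reduction $\cB$ mirroring Lemma~\ref{Th01}: construct $\Gamma_\ell$ and $\cD_\ell$, run $\cA$ on these for at most $N^{c\log s}$ steps, estimate $\eta=\dist_{\cD_\ell}(F,\Gamma_\ell)$ on the returned DNF $F$, and output \textsc{Yes} iff $\eta\le 1/(16N)$.

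In the \textsc{Yes} case $\opt(\cS)\le k$, so by Fact~\ref{oell} the target $\Gamma_\ell$ is \textsc{Monotone} $(k\ell)$-\textsc{Junta}, hence \textsc{Monotone} $(\log s)$-\textsc{Junta}, and $\cA$ returns a $1/(16N)$-close DNF within its time budget. In the \textsc{No} case $\opt(\cS)>k'$, and Lemma~\ref{error} guarantees that every DNF of size less than $2^{k'\ell/20}$ is at least $\bigl(1/(8N)-2^{-k'\ell/20}\bigr)$-far from $\Gamma_\ell$. A short computation gives $2^{k'\ell/20}=2^{(1-\beta)(k\ell)\ln N/20}\ge N^{(1-\beta)(\ln 2/20)\log s-O(1)}$, so choosing $c<(1-\beta)\ln 2/20$ drives the algorithm's output (which contains at most $N^{c\log s}$ terms) below the threshold, forcing its error above $1/(9N)$. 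The overall runtime of $\cB$ is $N^{c\log s}=N^{c\ell k}$, which falls below $N^{\lambda k}$ as soon as $c<\lambda/\ell$; taking $c=\min\bigl(\lambda/(2\ell),\,(1-\beta)\ln 2/40\bigr)$ secures both constraints and contradicts Conjecture~\ref{conj1}.

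The one place requiring care is the \textsc{No}-case arithmetic: unlike in Theorem~\ref{Th2}, where $\ell$ grew with $N$, here $\ell$ is fixed, so every factor of $\ell$ must be tracked to confirm that $k'\ell/20$ dominates $c\log s\cdot\log N$ in the exponent of $N$. This dominance is secured by the linear-in-$k$ gap $k'=\Omega(k\log N)$ promised by Conjecture~\ref{conj1} and is exactly what the earlier Lemma~\ref{Lin} lacked. Finally, the inclusion~(\ref{Mon}) lifts the bound from \textsc{Monotone} $(\log s)$-\textsc{Junta} to size-$s$ \textsc{Monotone DT} and size-$s$ \textsc{Monotone DNF}, and the ``knows the distribution / samples in poly time / evaluates target on support in poly time'' addendum follows, as in Lemma~\ref{Th01}, from the explicit construction of $\Gamma_\ell$ and $\cD_\ell$.
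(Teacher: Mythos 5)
Your proposal is correct and follows the paper's proof essentially verbatim: fix $\ell=5$, reduce $(k,k')$-\SC to learning $\Gamma_5$ over $\cD_5$ with $s=2^{5k}$, apply Lemma~\ref{error} in the \textsc{No} case, and balance the constant $c$ against $\lambda$ and $(1-\beta)$ so that the reduction runs in $N^{\lambda k}$ time, contradicting Conjecture~\ref{conj1}. Your choice of constant ($c=\min(\lambda/10,(1-\beta)\ln 2/40)$) is slightly more conservative than the paper's ($c=\min(\lambda/10,(1-\beta)/(20\log e))$), which is harmless, and your observation that the fixed $\ell$ is exactly what the stronger \SC gap buys is the right way to explain the jump from $n^{\tilde O(\log\log s)}$ to $n^{c\log s}$.
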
 
\begin{proof} We give the proof for \textsc{Monotone} $(\log s)$-\textsc{Junta}. As in the proof of Theorem~\ref{Th2}, the result then follows for the other classes.

Consider the constants $\alpha,\beta$ and $\lambda$ in Conjecture~\ref{conj1}. Let $c=\min(\lambda/10,(1-\beta)/(20\log e))$. Suppose there is a PAC learning algorithm $\cA$ for {{\textsc {Monotone $(\log s)$-Junta}}} by {\textsc {DNF}} with $\epsilon=1/(16n)$ that runs in time $n^{c\log s}$. 
We show that there is $k<N^\alpha$, $k=\omega(1)$, such that $(k,k')$-\SC
can be solved in time $N^{\lambda k}$ where $k'=(1-\beta)k\ln N$. By Conjecture~\ref{conj1}, the result then follows.

Consider the following algorithm $\cB$
\begin{enumerate}
\item Input $\cS=(S,U,E)$ an instance for $(k,k')$-\SC.
\item Construct $\Gamma_5$ and $\cD_5$.
\item Run $\cA$ using $\Gamma_5$ and $\cD_5$ with $s=2^{5 k}$.  If it runs more than $N^{5c k}$ steps, then output \No.
\item Let $F$ be the output DNF.
\item Estimate $\eta=\dist_{\cD_5}(F,\Gamma_5)$.
\item If $\eta\le \frac{1}{16N}$, output \Yes, otherwise output \No.
\end{enumerate}
Since $c<\lambda/10$, the running time of this algorithm is $N^{5ck}<N^{\lambda k}$. Therefore, it is enough to prove the following
\begin{claim}
Algorithm $\cB$ solves $(k,k')$-\SC.
\end{claim}
\begin{proof}
\Yes case: Let $\cS=(S,U,E)$ be a $(k,k')$-\SC instance and $\opt(\cS)\le k$. Then, $5\cdot \opt(\cS)\le 5k=\log s$, and by Fact~\ref{oell}, $\Gamma_5$ is \textsc{Monotone $\log s$-Junta}. Therefore, w.h.p., algorithm $\cA$ learns $\Gamma_5$ and outputs a DNF that is $\eta=1/(16N)$ close to the target with respect to~$\cD_5$. Since $\cB$ terminates $\cA$ after $N^{5c k}$ time, we only need to prove that $\cA$ runs at most $N^{5c k}$ time. 

The running time of $\cA$ is
$$n^{c \log s}\le N^{5c  k}.$$

No Case: Let $\cS=(S,U,E)$ be a $(k,k')$-\SC instance and $\opt(\cS)> k'=(1-\beta)k\ln N$. By Lemma~\ref{error}, any DNF, $F$, of size $|F|<2^{k'/4}$ satisfies $\dist_{\cD_5}(F,\Gamma_5)\ge 1/(8|U|)-2^{-k'/4}$. 
Since, $c<(1-\beta)/(20\log e)$, 
$$2^{k'/4}=2^{\frac{(1-\beta)k\ln N}{4}}=N^{\frac{(1-\beta)k}{4\log e}}>N^{5ck},$$
any DNF, $F$, that the learning outputs satisfies $$\dist_{\cD_5}(F,\Gamma_5)\ge \frac{1}{8|U|}-2^{-k'/4}\ge \frac{1}{8N}-\frac{1}{N^{5ck}}\ge \frac{1}{9N}.$$
Therefore, with high probability the algorithm answer \No. 
\end{proof}
\end{proof}

\section{Strictly Proper Learning}

In this section, we prove 
\begin{theorem}\label{THEND}
Assuming randomized ETH, there is a constant $c$ such that any PAC learning algorithm for $n$-variable \textsc{Monotone} $(\log s)$-\textsc{Junta}, \mbox{\textsc{{\it size}-$s$ Monotone DT}  {\it and size}-$s$ \textsc{Monotone DNF}} by {\it size}-$s$ {\textsc {DNF}} with $\epsilon=1/(16n)$ must take at least 
$$n^{c\log s}$$ time.

The lower bound holds, even if the learner knows the distribution, can draw a sample according to the distribution in polynomial time and compute the target on all the points of the support of the distribution in polynomial time.
\end{theorem}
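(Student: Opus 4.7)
The proof follows the reduction template of Theorems~\ref{Th2} and~\ref{TTh2}, with the $(k,k')$-\SC hardness supplied by Lemma~\ref{Lin} rather than Conjecture~\ref{conj1}. Fix $\lambda$ from Lemma~\ref{Lin}, take $k=\tfrac{1}{2}\log\log N/\log\log\log N$ and $k'=\tfrac{1}{2}(\log N/\log\log N)^{1/k}$, choose an odd constant $\ell\ge 5$ and $c<\lambda/(2\ell)$, and set $s=2^{k\ell}$. Given a $(k,k')$-\SC instance $\cS=(S,U,E)$, the reduction $\cB$ constructs $\Gamma_\ell$ and $\cD_\ell$, simulates the hypothesized size-$s$ DNF learner $\cA$ with $\epsilon=1/(16n)$ (where $n=n_S\ell$ denotes the learner's number of variables) for at most $N^{\lambda k/2}$ steps, estimates $\eta=\dist_{\cD_\ell}(F,\Gamma_\ell)$, and outputs \Yes iff $\eta\le 1/(9N)$.

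The Yes case ($\opt(\cS)\le k$) goes through as in Theorem~\ref{TTh2}: by Fact~\ref{oell}, $\Gamma_\ell$ is \textsc{Monotone} $(\log s)$-\textsc{Junta} and therefore lies in all three target classes; the running-time budget $n^{c\log s}=(n_S\ell)^{ck\ell}\le N^{\lambda k/2}$ is enough for $\cA$ to terminate and output a size-$s$ DNF at distance at most $1/(16n)<1/(9N)$.

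The main obstacle is the No case ($\opt(\cS)>k'$), where we need every size-$s$ DNF to be strictly more than $1/(9N)$-far from $\Gamma_\ell$. Lemma~\ref{error} only gives $1/(8|U|)-2^{-\opt(\cS)\ell/20}$; this is strong enough under Conjecture~\ref{conj1} because $k'=\Omega(k\log N)$ makes the subtractive term $2^{-\Omega(k\log N)}=N^{-\Omega(k)}$, but under ETH we only have $k'=(\log\log N)^{O(1)}$, so $2^{-k'\ell/20}\gg 1/N$ for any constant $\ell$ and the raw bound is useless. This is exactly why the paper introduces the strengthened Lemma~\ref{errorA}: it must yield a distance lower bound of $\Omega(1/|U|)$ for every size-$s$ DNF while avoiding the wasteful $2^{-\opt(\cS)\ell/20}$ loss from Claim~\ref{Truncate}. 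The natural path to such a strengthening is to replace the worst-case truncation of Claim~\ref{Truncate} by an argument that exploits the size-$s$ hypothesis directly on firing terms in $\Delta_n^1$: terms of monotone size exceeding $\lceil\ell/2\rceil(\opt(\cS)-1)/2$ capture at most $s\cdot 2^{-\opt(\cS)\ell/10}$ mass of $\Delta_n^1$ by the calculation in Claim~\ref{Truncate} (now used with $|F|\le s$ directly rather than folded into the exponent), and for the remaining mass Claim~\ref{coor} fires unchanged and exports $\Omega(1/|U|)$ mass to $\Delta_n^0$ on which $F=1$ and $\Gamma_\ell=0$.

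Once Lemma~\ref{errorA} is in hand, the reduction closes in the expected way: in the No case the estimate $\eta$ exceeds $1/(9N)$ so $\cB$ outputs \No, and in total $\cB$ decides $(k,k')$-\SC in time $N^{\lambda k/2}<N^{\lambda k}$, contradicting Lemma~\ref{Lin}. The lower bound for \textsc{Monotone} $(\log s)$-\textsc{Junta} by size-$s$ DNF transfers to size-$s$ \textsc{Monotone DT} and size-$s$ \textsc{Monotone DNF} via the inclusions~(\ref{Mon}); the ``known distribution / polynomial-time support'' clause follows from the explicit constructions of $\cD_\ell$ and $\Gamma_\ell$, as in Lemma~\ref{Th01}.
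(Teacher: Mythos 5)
Your reduction follows the paper's template, and you correctly diagnose the central obstacle: with $\ell$ forced to be a constant (so that $\Gamma_\ell$ is a \emph{size-$s$} DNF and the budget $n^{c\log s}$ matches $N^{\Theta(k)}$), the residual $2^{-\opt(\cS)\ell/20}$ in Lemma~\ref{error} is of order $2^{-\Theta(k'\ell)}=2^{-(\log\log N)^{O(1)}}\gg 1/N$, so Lemma~\ref{error} alone cannot certify the \No case; a residual-free bound is needed. One small bookkeeping slip: with your choice $n=n_S\ell$ the \Yes-case error bound $1/(16n)$ is not necessarily below your acceptance threshold $1/(9N)$ when $n_S\ell<N$; the paper avoids this by simply handing the learner $n=N$ (i.e.\ padding with dummy coordinates) so that $\epsilon=1/(16N)$ and the threshold $1/(16N)$ line up.

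Your proposed route to Lemma~\ref{errorA} is, however, genuinely different from the paper's and worth comparing. The paper replaces Claim~\ref{Truncate} by Claim~\ref{jkl}, a concavity-of-$\log$ argument: if $\dist_{\cD_\ell}(F,\Gamma_\ell)\le 1/4$ and $|F|\le s$, then $\dE{\by\sim\cU(\Omega)}[\mwidth_F(\by)]\le 4\log s$ with no additive loss, and then Claim~\ref{coorA} runs a \emph{single} Markov over the joint pair $(\bz,\bj)$ using $\dE{\bz,\bj}[q_{\bz}(\bj)]=\dE{\bz}[\mwidth_F(\bz)]/\lceil\ell/2\rceil\le\opt(\cS)/2$. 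You instead propose a union bound: the $\cU(\Delta_n^1)$-mass satisfied only by terms of monotone size $>M\approx\opt(\cS)\ell/4$ is at most $s\cdot 2^{-M/2}$, and on the complementary set $\Omega_{\text{good}}$ each point has some small-monotone-size satisfying term, so the argument inside Claim~\ref{coor} applies pointwise (its proof only uses the one satisfying term, so the ``all terms small'' hypothesis is not actually needed). This route is correct, but you leave implicit the one step that makes the new lemma work: the quantity $s\cdot 2^{-M/2}$ is \emph{still} not $o(1/N)$ in the ETH regime, so it cannot be subtracted from the final error as in Lemma~\ref{error}. Instead one must check that $s<2^{\opt(\cS)\ell/16}$ forces $s\cdot 2^{-M/2}\le 1/4$, so that (in the case $\Prr{\by\sim\cU(\Delta_n^1)}[F(\by)\ne 1]<1/(4|U|)$) at least half the $\cU(\Delta_n^1)$-mass lands in $\Omega_{\text{good}}$, and that multiplicative $1/2$ is absorbed into the $1/(8|U|)$ constant rather than costing a subtractive term. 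Spelling this out is exactly where the improvement over Lemma~\ref{error} lives. With that accounting made explicit your argument does yield $\dist_{\cD_\ell}(F,\Gamma_\ell)\ge 1/(8|U|)$ and the rest of your reduction closes as claimed; the paper's version via Claim~\ref{jkl} is marginally more informative (it gives a clean $\dE{}[\mwidth_F]\le 4\log s$ bound), while yours is more elementary.
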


We first prove the following stronger version of Lemma~\ref{error}
\begin{lemma}\label{errorA}
Let $\cS=(S,U,E)$ be a set cover instance, and let $\ell\ge 5$. If $F:(\{0,1\}^\ell)^n\to \{0,1\}$ is a DNF of size $|F|<2^{\opt(\cS)\ell/16}$, then $\dist_{\cD_\ell}(F,\Gamma_\ell)\ge 1/(8|U|)$.
\end{lemma}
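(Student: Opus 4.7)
The plan is to mimic the two-step proof of Lemma~\ref{error}, but with a tighter choice of constants that removes the additive $2^{-\opt(\cS)\ell/20}$ slack. Two ingredients will be key: a larger monotone-size truncation threshold, affordable under the enlarged size budget $|F|<2^{\opt(\cS)\ell/16}$; and a sharper, case-free bound on $\dist_{\cD_\ell}(F',\Gamma_\ell)$ for the truncated DNF $F'$ that yields $1/(4|U|)$ rather than the $1/(8|U|)$ used inside Lemma~\ref{error}'s proof, opening up enough slack to absorb the truncation loss.

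Concretely, I would split $F = F' \vee F''$, where $F'$ contains the terms of monotone size at most $t := \lceil \ell/2\rceil(\opt(\cS)-1)/2$ (the largest value for which Claim~\ref{coor} still applies) and $F''$ is the rest. The per-term estimate $\Pr_{\by\sim\cU(\Delta_n^1)}[T(\by)=1]\le 2^{-|T_\cM|/2}$ from Claim~\ref{Truncate}'s proof, together with a union bound, gives $\Pr_{\by\sim\cU(\Delta_n^1)}[F''(\by)=1] \le |F|\cdot 2^{-t/2}$. Observing that $F=1,F'=0$ forces $F''=1$ (since $F'\le F$ pointwise), one obtains
\[
\dist_{\cD_\ell}(F,\Gamma_\ell) \;\ge\; \dist_{\cD_\ell}(F',\Gamma_\ell) \;-\; \tfrac{1}{2}\Pr_{\by\sim\cU(\Delta_n^1)}[F''(\by)=1].
\]
For the truncated $F'$, set $p' := \Pr_{\by\sim\cU(\Delta_n^1)}[F'(\by)=0]$ and $q' := \Pr_{\by\sim\cU(\Delta_n^0)}[F'(\by)=1]$. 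Since every term of $F'$ meets the monotone-size hypothesis of Claim~\ref{coor}, combining that claim with the alternative sampling of Fact~\ref{sample} delivers $q' \ge (1-p')/(2|U|)$, and Fact~\ref{halff} then gives
\[
\dist_{\cD_\ell}(F',\Gamma_\ell) \;=\; \tfrac{p'+q'}{2} \;\ge\; p'\Bigl(\tfrac{1}{2}-\tfrac{1}{4|U|}\Bigr) + \tfrac{1}{4|U|} \;\ge\; \tfrac{1}{4|U|},
\]
using $|U|\ge 1$ to make the coefficient of $p'$ nonnegative. This uniform lower bound, in place of the case-split used inside Lemma~\ref{error}'s proof, is the key sharpening: it doubles the slack.

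Combining the two displayed inequalities, the desired estimate $\dist_{\cD_\ell}(F,\Gamma_\ell) \ge 1/(8|U|)$ reduces to verifying $|F|\cdot 2^{-t/2} \le 1/(4|U|)$. The main obstacle is this last arithmetic step: with $t = \lceil\ell/2\rceil(\opt(\cS)-1)/2$ and $|F|<2^{\opt(\cS)\ell/16}$, the new $1/16$ exponent is tuned so that $\opt(\cS)\ell/16 - t/2$ is sufficiently negative for the truncation loss to be absorbed, whereas the $1/20$ exponent of Lemma~\ref{error} leaves behind the $2^{-\opt(\cS)\ell/20}$ residue that appears there. Edge regimes in which $\opt(\cS)\ell$ is too small for this arithmetic to bite are essentially vacuous, since the hypothesis $|F|<2^{\opt(\cS)\ell/16}$ then forces $|F|\le 1$, and a single-term $F$ (or the empty DNF) can be handled by direct inspection.
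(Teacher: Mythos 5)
There is a genuine gap in your argument, and it lives exactly in the final arithmetic step that you explicitly flag as "the main obstacle." You reduce to verifying $|F|\cdot 2^{-t/2} \le 1/(4|U|)$ and suggest this is merely a matter of constants, with the only edge regime being small $\opt(\cS)\ell$. But the failure mode is not small $\opt(\cS)\ell$ --- it is large $|U|$. The lemma is stated for an \emph{arbitrary} set cover instance $\cS=(S,U,E)$ and arbitrary odd $\ell\ge 5$, and there is no assumed relationship between $|U|$ and $\opt(\cS)\ell$. With $t = \lceil\ell/2\rceil(\opt(\cS)-1)/2$ and $|F|<2^{\opt(\cS)\ell/16}$ you get at best $|F|\cdot 2^{-t/2} \le 2^{\opt(\cS)\ell/16 - t/2} \approx 2^{-\opt(\cS)\ell/16}$, which is indeed exponentially small in $\opt(\cS)\ell$, but the right-hand side $1/(4|U|)$ can be exponentially smaller still when $|U|$ is large. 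For instance, taking $|U| > 2^{\opt(\cS)\ell/16}$ makes the required inequality false even for large $\opt(\cS)\ell$. This is precisely why Lemma~\ref{error} carries the additive loss $2^{-\opt(\cS)\ell/20}$: any truncation-based argument that simply discards terms of large monotone size pays a truncation error that is a function of $\opt(\cS)\ell$ alone, and such a quantity cannot in general be hidden inside $1/(8|U|)$.

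The paper's proof of Lemma~\ref{errorA} sidesteps truncation entirely. It introduces the monotone width $\mwidth_F(z)$, defines $\Omega=\Delta_n^1\cap F^{-1}(1)$, and proceeds in two steps: Claim~\ref{jkl} shows that if $\dist_{\cD_\ell}(F,\Gamma_\ell)\le 1/4$ then $\E_{\by\sim\cU(\Omega)}[\mwidth_F(\by)]\le 4\log|F|$, and Claims~\ref{coorA} and~\ref{pop} show that whenever $\E_{\by\sim\cU(\Omega)}[\mwidth_F(\by)]\le \opt(\cS)\ell/4$ one already has $\dist_{\cD_\ell}(F,\Gamma_\ell)\ge 1/(8|U|)$. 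Because the bound on the expected monotone width is derived from the (assumed small) distance rather than imposed by throwing away terms, there is no residual error term to absorb, and no arithmetic involving $|U|$ ever appears. That average-case move is the missing idea in your proposal. Your sharpening of the case-split in Lemma~\ref{error}'s proof to the uniform bound $\dist_{\cD_\ell}(F',\Gamma_\ell)\ge p'(\tfrac12-\tfrac{1}{4|U|})+\tfrac{1}{4|U|}\ge \tfrac{1}{4|U|}$ is a correct and pleasant observation, but it cannot rescue the overall strategy: the bottleneck is not the constant in that step, it is the hard truncation.
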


To prove this lemma, we will give some more results. 

Recall that, for a term $T$, $T_\cM$ is the conjunction of all the unnegated variables in $T$. We define the {\it monotone size} of $T$ to be $|T_\cM|$. For a DNF $F=T_1\vee T_2\vee \cdots \vee T_s$ and $z\in (\{0,1\}^\ell)^n$, we define the {\it monotone width} of $z$ in $F$ as
$$\mwidth_F(z):=\left\{
\begin{array}{ll}
\min_{T_i(z)=1}|(T_i)_\cM|&F(z)=1\\
0&F(z)=0
\end{array}.
\right. $$
We define $F^{-1}(1)=\{z|F(z)=1\}$ and $$\Omega=\Delta^1_n\cap F^{-1}(1).$$
\begin{claim}\label{coorA}
Let $F$ be a DNF with $$\dE{\bz\sim \cU(\Omega)}\left[\mwidth_F(\bz)\right]\le \opt(S)\cdot\ell/4.$$ Then 
$$\Prr{\bz\sim\cU(\Omega),\bj\sim \cU(\one(\bz)),\by\sim\cU(\bz^{\bj\gets U})} [F(\by)=1]\ge \frac{1}{2|U|}.$$
\end{claim}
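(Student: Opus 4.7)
\medskip

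\textbf{Proof plan for Claim~\ref{coorA}.} The plan is to mimic the proof of Claim~\ref{coor} almost verbatim, but replace the per-point Markov step by a single Markov step applied to the joint distribution of $(\bz,\bj)$, exploiting the hypothesis that the monotone width is small \emph{on average} rather than in every term.

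For each $z\in\Omega$, select (once and for all, as a deterministic function of $z$) a term $T_z$ of $F$ that satisfies $T_z(z)=1$ and achieves $|(T_z)_\cM|=\mwidth_F(z)$. Imitating the counting step in Claim~\ref{coor}, for $j\in\one(z)$ let $q(z,j)$ denote the number of variables of the form $y_{i,j_i}$ appearing in $T_z$; each such variable must appear unnegated because $j\in\one(z)$, so the induced conjunction $T_z^j$ is a monotone term of size $q(z,j)$. The same double-counting argument as before (each unnegated variable in $T_z$ contributes $\lceil\ell/2\rceil^{n-1}$ to $\sum_{j\in\one(z)} q(z,j)$, while $|\one(z)|=\lceil\ell/2\rceil^n$) gives
$$\dE{\bj\sim\cU(\one(z))}[q(z,\bj)]=\frac{\mwidth_F(z)}{\lceil\ell/2\rceil}.$$

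Now I average over $\bz\sim\cU(\Omega)$. Using the hypothesis $\E_{\bz}[\mwidth_F(\bz)]\le\opt(\cS)\ell/4$ and $\ell/\lceil\ell/2\rceil\le 2$, the joint expectation satisfies $\E_{\bz,\bj}[q(\bz,\bj)]\le\opt(\cS)/2$. Markov's inequality then yields
$$\Prr{\bz\sim\cU(\Omega),\bj\sim\cU(\one(\bz))}\bigl[q(\bz,\bj)<\opt(\cS)\bigr]\ge\tfrac12.$$

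Whenever $q(z,j)<\opt(\cS)$, the monotone term $T_z^j$ has size strictly below $\opt(\cS)$, so Fact~\ref{factopt} supplies a $u\in U$ with $T_z^j(u)=1$; Fact~\ref{Factz}(\ref{Factz3}) then gives $T_z(z^{j\gets u})=T_z^j(u)=1$, hence $F(z^{j\gets u})=1$. Since $|z^{j\gets U}|=|U|$ by Fact~\ref{Factz}(\ref{Factz2}), a uniformly random $\by\in z^{j\gets U}$ hits such a good $u$ with probability at least $1/|U|$. Combining:
$$\Prr{\bz,\bj,\by}[F(\by)=1]\ \ge\ \Prr{\bz,\bj}[q(\bz,\bj)<\opt(\cS)]\cdot\frac{1}{|U|}\ \ge\ \frac{1}{2|U|},$$
which is the desired bound. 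I expect no serious obstacle: the only subtlety is making sure the selection $z\mapsto T_z$ is treated as a fixed deterministic rule so that $q(\bz,\bj)$ is a well-defined random variable, after which the argument reduces to the original proof of Claim~\ref{coor} with Markov invoked once, globally.
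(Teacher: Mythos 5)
Your proposal is correct and follows essentially the same route as the paper's own proof: fix a minimum-monotone-width witnessing term $T^z$ for each $z\in\Omega$, compute $\E_{\bj}[q_z(\bj)]=\mwidth_F(z)/\lceil\ell/2\rceil$ by the same double-counting, average over $\bz\sim\cU(\Omega)$, apply Markov once to the joint pair $(\bz,\bj)$, and then invoke Fact~\ref{factopt} and Fact~\ref{Factz} exactly as in Claim~\ref{coor}. No meaningful differences.
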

\begin{proof} Let $z\in\Omega$. Then $F(z)=1$ and $z\in\Delta^1_n$. Let $T^z$ be the term in $F$ with $|T^z_\cM|=\mwidth_F(z)$ that satisfies $T^z(z)=1$. Let $Y_0=\{y_{i,m}|z_{i,m}=0\}$ and $Y_1=\{y_{i,m}|z_{i,m}=1\}$. Since $T^z(z)=1$, every variable in $Y_0$ that appears in $T^z$ must be negated, and every variable in $Y_1$ that appears in $T^z$ must be unnegated. 
For $j\in \one(z)$, define $q_z(j)$ to be the number of variables in $\{y_{1,j_1},\ldots,y_{n,j_n}\}$ that appear in $T^z(y)$. All those variables appear unnegated in $T$ because $j\in \one(z)$. Each variable in $T^z_\cM$ contributes $\lceil\ell/2\rceil^{n-1}$ to the sum $\sum_{j\in one(z)}q_z(j)$ and $|\one(z)|=\lceil\ell/2\rceil^n$. Therefore,
$$\dE{\bj\sim \cU(\one(z))}[q_z(\bj)]=\frac{|T^z_\cM|}{\lceil \ell/2\rceil}=\frac{\mwidth_F(z)}{\lceil \ell/2\rceil}.$$
Now, 
\begin{eqnarray*}
\dE{\bz\sim\cU(\Omega),\bj\sim\cU(\one(\bz))}[q_{\bz}(\bj)] 
&=& 
\frac{\dE{\bz\sim \cU(\Omega)}\left[\mwidth_F(\bz)\right]}{\lceil\ell/2\rceil}\\
&\le& \frac{\opt(\cS)}{2}.
\end{eqnarray*}
By Markov's bound, 
$$\Prr{\bz\sim\cU(\Omega),\bj\sim\cU(\one(\bz))}[q_{\bz}(\bj)<\opt(\cS)]\ge \frac{1}{2}.$$

Suppose for some $z\in\Omega$ and $j\in \one(z)$, we have $q_z(j)< \opt(\cS)$. Let $T^j$ be the conjunction of all the variables that appear in $T_\cM^z$ of the form $y_{i,j_i}$. Then $|T^j|=q_z(j)< \opt(\cS)$. 
By Fact~\ref{factopt}, there is $u\in U$ such that $T^j(u)=1$.
By Fact~\ref{Factz}, we have  $T^z(z^{j\gets u})=T^j(u)=1$. Then $F(z^{j\gets u})=1$. Since by item~\ref{Factz1} in Fact~\ref{Factz}, $|z^{j\gets U}|=|U|$, we have $$\Prr{\bz\sim\cU(\Omega),\bj\sim\cU(one(\bz)),\by\sim\cU(\bz^{\bj\gets U})}[F(\by)=1|q_{\bz}(\bj)<\opt(\cS)]\ge \frac{1}{|U|}.$$ 
Therefore,
\begin{eqnarray*}
\Prr{\bz\sim\cU(\Omega),\bj\sim \cU(\one(\bz)),\by\sim\cU(\bz^{\bj\gets U})}[F(\by)=1]&\ge& \Prr{\bz\sim\cU(\Omega),\bj\sim\cU(\one(\bz))}[q_{\bz}(\bj)<\opt(\cS)]\cdot\\ &&\Prr{\bz\sim\cU(\Omega),\bj\sim\cU(one(\bz)),\by\sim\cU(\bz^{\bj\gets U})}[F(\by)=1|q_{\bz}(\bj)<\opt(\cS)]\\
&\ge&\frac{1}{2|U|}.
\end{eqnarray*}

\end{proof}

\begin{claim}\label{pop}
Let $\cS=(S,U,E)$ be a set cover instance, and let $\ell\ge 5$. If $F:(\{0,1\}^\ell)^n\to \{0,1\}$ is a DNF and $\dE{\bz\sim\cU(\Omega)}[\mwidth_F(\bz)]\le \opt(\cS)\ell/4$, then $\dist_{\cD_\ell}(F,\Gamma_\ell)\ge 1/(8|U|)$.
\end{claim}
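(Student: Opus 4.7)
The plan is to follow the two-case template used in the proof of Lemma~\ref{error}, but to omit the truncation step (Claim~\ref{Truncate}) since our hypothesis already bounds the average monotone width directly. The engine will be Claim~\ref{coorA}, whose assumption matches the hypothesis of Claim~\ref{pop} verbatim.

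First, I would split $\dist_{\cD_\ell}(F,\Gamma_\ell)$ by conditioning on whether $\Gamma_\ell(\by)=1$ (i.e., $\by\in\Delta^1_n$) or $\Gamma_\ell(\by)=0$ (i.e., $\by\in\Delta^0_n$); by Fact~\ref{halff} each of these two events has $\cD_\ell$-mass exactly $1/2$. I would then branch on the value of $\Prr{\by\sim\cU(\Delta^1_n)}[F(\by)\ne 1]$. If this quantity is at least $1/(4|U|)$, the contribution from $\Gamma_\ell(\by)=1$ is already $\ge (1/2)\cdot(1/(4|U|))=1/(8|U|)$, and we are done immediately.

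Otherwise $|\Omega|/|\Delta^1_n|>1-1/(4|U|)$, and I would lower-bound the contribution from $\Gamma_\ell(\by)=0$ instead. By Fact~\ref{sample}, sampling $\by\sim\cU(\Delta^0_n)$ is equivalent to drawing $\bw\sim\cU(\Delta^1_n)$, then $\bj\sim\cU(\one(\bw))$, then $\by\sim\cU(\bw^{\bj\gets U})$. Conditioning on the event $\bw\in\Omega$---which has probability strictly greater than $1-1/(4|U|)$ and, because $\Omega\subseteq\Delta^1_n$, makes the conditional distribution of $\bw$ uniform on $\Omega$---Claim~\ref{coorA} applies and yields $\Pr[F(\by)=1\mid \bw\in\Omega]\ge 1/(2|U|)$. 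Multiplying gives $\Prr{\by\sim\cU(\Delta^0_n)}[F(\by)=1]\ge (1-1/(4|U|))\cdot(1/(2|U|))\ge 1/(4|U|)$, so the contribution from $\Gamma_\ell(\by)=0$ is at least $(1/2)\cdot 1/(4|U|)=1/(8|U|)$.

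No serious obstacle remains: the argument is an accounting exercise that funnels the average-width hypothesis straight into Claim~\ref{coorA}. The only step deserving a careful sentence in the write-up is the conditioning in Case~B, where one must verify that conditioning $\bw\sim\cU(\Delta^1_n)$ on $\bw\in\Omega$ while leaving the downstream samples $\bj,\by$ alone does produce exactly the joint distribution of $(\bz,\bj,\by)$ assumed in Claim~\ref{coorA}; this holds because $\Omega\subseteq\Delta^1_n$ and the conditional distribution of $\bj$ and $\by$ given $\bw$ is unchanged under conditioning on $\bw$.
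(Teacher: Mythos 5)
Your proof is correct and follows essentially the same route as the paper's: the same case split on whether $\Prr{\by\sim\cU(\Delta^1_n)}[F(\by)\ne 1]$ exceeds $1/(4|U|)$, the same use of Fact~\ref{sample} to rewrite $\Prr{\by\sim\cU(\Delta^0_n)}[F(\by)=1]$, and the same conditioning on $\bz\in\Omega$ so that Claim~\ref{coorA} applies, with the same final accounting $\tfrac12\cdot\tfrac{1}{2|U|}\cdot(1-\tfrac{1}{4|U|})\ge \tfrac{1}{8|U|}$. The subtlety you flag about conditioning on $\bw\in\Omega$ is real and you resolve it correctly.
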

\begin{proof} 
If $\Prr{\by\sim\cU(\Delta_n^1)}[F(\by)\not=1]\ge 1/(4|U|)$, then by Fact~\ref{halff}, we have 
$$\dist_{\cD_\ell}(\Gamma_\ell,F)\ge \Prr{\by\sim\cD_\ell}[\Gamma_\ell(\by)\not=F(\by)|\Gamma_\ell(\by)=1]\Prr{\by\sim \cD_\ell}[\Gamma_\ell(\by)=1]=\frac{1}{2}\Prr{\by\sim \cU(\Delta^1_n)}[F(\by)\not=1]\ge \frac{1}{8|U|}.$$
If $\Prr{\by\sim\cU(\Delta_n^1)}[F(\by)\not=1]< 1/(4|U|)$, then by Fact~\ref{halff} and~\ref{sample}, and Claim~\ref{coorA}, 
\begin{eqnarray*}
\dist_{\cD_\ell}(\Gamma_\ell,F)&\ge& \Prr{\by\sim\cD_\ell}[\Gamma_\ell(\by)\not=F(\by)|\Gamma_\ell(\by)=0]\Prr{\by\sim\cD_\ell}[\Gamma_\ell(\by)=0]\\
&=&\frac{1}{2}\Prr{\by\sim\cU(\Delta_n^0)}[F(\by)=1]\\
&=&\frac{1}{2}\Prr{\bz\sim\cU(\Delta^1_n), \bj\sim\cU(\one(\bz)),\by\sim \cU(\bz^{\bj\gets U})}[F(\by)=1]\\
&\ge&\frac{1}{2}\Prr{\bz\sim\cU(\Delta^1_n), \bj\sim\cU(\one(\bz)),\by\sim \cU(\bz^{\bj\gets U})}[F(\by)=1|F(\bz)=1]\cdot \Prr{\bz\sim\cU(\Delta^1_n)}[F(\bz)=1]\\
&=&\frac{1}{2}\Prr{\bz\sim\cU(\Omega), \bj\sim\cU(\one(\bz)),\by\sim \cU(\bz^{\bj\gets U})}[F(\by)=1]\cdot \Prr{\bz\sim\cU(\Delta^1_n)}[F(\bz)=1]\\
&\ge&\frac{1}{2}\frac{1}{2|U|}\left(1-\frac{1}{4|U|}\right)\ge \frac{1}{8|U|}.
\end{eqnarray*}
\end{proof}

\begin{claim}\label{jkl}
Let $F$ be a size-$s$ DNF formula for $s\ge 2$ such that $\dist_{\cD_\ell}(F,\Gamma_\ell)\le 1/4$, then
$$\dE{\by\sim\cU(\Omega)}[\mwidth_F(\by)]\le 4\log s.$$
\end{claim}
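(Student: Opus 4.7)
The plan is to run the standard Koch–Strassle–Tan-style argument that a DNF of bounded size which is close to the target cannot have large expected width. The essential tool is already hidden inside the proof of Claim~\ref{Truncate}: for any term $T$ whose per-coordinate monotone count does not exceed $\lceil \ell/2 \rceil$, we have $\Pr_{\by \sim \cU(\Delta^1_n)}[T(\by) = 1] \le 2^{-|T_\cM|/2}$. Any term violating the per-coordinate bound is identically $0$ on $\Delta^0_n \cup \Delta^1_n$ and can be dropped from $F$ without changing anything relevant.

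First, I would convert the hypothesis $\dist_{\cD_\ell}(F, \Gamma_\ell) \le 1/4$ into a lower bound on $\mu := \Pr_{\by \sim \cU(\Delta^1_n)}[F(\by) = 1]$. Since $\Gamma_\ell \equiv 1$ on $\Delta^1_n$ and $\Pr_{\cD_\ell}[\Delta^1_n] = 1/2$ by Fact~\ref{halff}, the distance bound gives $\Pr_{\by \sim \cD_\ell}[\by \in \Delta^1_n,\ F(\by) = 0] \le 1/4$, whence $\mu \ge 2(1/2 - 1/4) = 1/2$. In other words, the uniform measure on $\Omega = \Delta^1_n \cap F^{-1}(1)$ costs at most a factor of $2$ compared with the uniform measure on $\Delta^1_n$.

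Next, for each threshold $W$, let $F_{\ge W}$ be the disjunction of those terms of $F$ whose monotone size is at least $W$. For $\by \in \Omega$, the event $\mwidth_F(\by) \ge W$ is exactly $F_{\ge W}(\by) = 1$. A union bound over at most $s$ such terms together with the per-term tail bound gives $\Pr_{\by \sim \cU(\Delta^1_n)}[F_{\ge W}(\by) = 1] \le s \cdot 2^{-W/2}$, so conditioning on $\Omega$ produces
\[
\Pr_{\by \sim \cU(\Omega)}[\mwidth_F(\by) \ge W] \;\le\; \min\bigl(1,\ 2s \cdot 2^{-W/2}\bigr).
\]

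Finally, I would write the expectation as a sum of tail probabilities, $\dE{\by \sim \cU(\Omega)}[\mwidth_F(\by)] = \sum_{W \ge 1} \Pr_{\by \sim \cU(\Omega)}[\mwidth_F(\by) \ge W]$, and split the sum at the threshold $W^\star \approx 4 \log s$ where the exponential bound drops below $1$. The first $W^\star$ terms contribute at most $4 \log s$, and the geometric tail $\sum_{W > W^\star} 2s \cdot 2^{-W/2}$ is $O(1/s)$, small enough to be absorbed. The main obstacle is cosmetic, namely pinning the constant down to exactly $4 \log s$ rather than $4 \log s + O(1)$; this is a bookkeeping issue that can be resolved either by exploiting the slack in $2^{-|T_\cM|/2}$ (the underlying harmonic sum tends to $\ln 2$ as $\ell$ grows, giving a genuinely stronger bound) or by rounding the split point carefully using $s \ge 2$.
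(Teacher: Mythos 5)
Your tail-probability approach is a legitimate alternative to the paper's route and is structurally sound, but the gap you flag at the end is not merely cosmetic. A small slip first: for $y\in\Omega$, the event $\mwidth_F(y)\ge W$ is \emph{contained in}, not ``exactly,'' the event $F_{\ge W}(y)=1$ (a term of small monotone size might also satisfy $y$); this is the right direction for the union bound, so the inequality survives. The real problem is the geometric-tail overhead. Summing $\min\bigl(1,\,2s\cdot 2^{-W/2}\bigr)$ over $W\ge 1$ gives roughly $2\log(2s)+\frac{1}{1-2^{-1/2}}\approx 2\log s + 4.4$, which exceeds $4\log s$ for $s\in\{2,3,4\}$. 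Neither proposed fix closes this. The harmonic sum $\sum_{j=\lceil\ell/2\rceil+1}^{\ell}1/j$ in Claim~\ref{Truncate} is about $0.45$ (rate $\approx 0.65$ in base $2$) at $\ell=5$ --- and $\ell=5$ is exactly what Theorem~\ref{THEND} uses --- so ``letting $\ell$ grow'' is not available; plugging in $\alpha=0.65$ still gives $\frac{\log(2s)}{0.65}+\frac{1}{2^{0.65}-1}\approx 1.54\log s+4.8$, which is $>4$ at $s=2$. And ``rounding carefully using $s\ge 2$'' cannot help since $s=2$ is precisely where the overshoot is largest.

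The paper avoids the geometric tail by moving to log-space before integrating. Writing $\frac{1}{2}\dE{\by\sim\cU(\Omega)}[\mwidth_F(\by)]-1 = \dE{\by\sim\cU(\Omega)}\bigl[\log 2^{\mwidth_F(\by)/2-1}\bigr]$ and applying concavity of $\log$, the moment-generating-function-like quantity $\dE{\by}\bigl[2^{\mwidth_F(\by)/2-1}\bigr]$ is then bounded by $s$: each $y\in\Omega$ is charged to the \emph{one} term $T_{\omega(y)}$ achieving the minimum monotone size, so the events $\{\omega(\by)=i\}$ partition $\Omega$ and $\sum_i 2^{|(T_i)_\cM|/2-1}\Prr{\by\sim\cU(\Omega)}[\omega(\by)=i]\le \sum_i 2^{|(T_i)_\cM|/2-1}\cdot 2^{-|(T_i)_\cM|/2+1}=s$ (no union-bound loss). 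This yields $\dE{\by\sim\cU(\Omega)}[\mwidth_F(\by)]\le 2\log s+2$, which equals $4\log s$ exactly at $s=2$. So Jensen is doing genuine quantitative work here: to make your version go through you would need either a per-term rate close to $1$ (i.e., $\ell$ large) or to replace the per-threshold union bound with the partition by $\omega(\by)$, at which point you have essentially rederived the paper's computation.
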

\begin{proof}
First, we have
\begin{eqnarray*}
\frac{3}{4}&\le& \Prr{\by\sim \cD_\ell}[F(\by)=\Gamma_\ell(\by)]\\&=& \frac{1}{2}\Prr{\by\sim\cD_\ell}[F(\by)=\Gamma_\ell(\by)|\Gamma_\ell(\by)=1]+\frac{1}{2}\Prr{\by\sim\cD_\ell}[F(\by)=\Gamma_\ell(\by)|\Gamma_\ell(\by)=0]\\
&\le& \frac{1}{2}\Prr{\by\sim\cU(\Delta_n^1)}[F(\by)=1]+\frac{1}{2}.
\end{eqnarray*}
 Therefore, $\Pr_{\by\sim\cU(\Delta^1_n)}[F(\by)=1]\ge 1/2.$ 
 
Let $F=T_1\vee T_2\vee\cdots \vee T_s$. For $y\in \Omega$, let $\omega(y)\in [s]$ be the minimum integer such that $\mwidth_F(y)=|(T_{\omega(y)})_\cM|$ and $T_{\omega(y)}(y)=1$.
 
 Then, by (\ref{lstep}), 
$$\Prr{\by\sim\cU(\Omega)}[T_i(\by)=1]=\Prr{\by\sim\cU(\Delta^1_n)}[T_i(\by)=1|F(\by)=1]=\frac{\Prr{\by\sim\cU(\Delta^1_n)}[T_i(\by)=1]}{\Prr{\by\sim\cU(\Delta^1_n)}[F(\by)=1]}\le 2^{-|(T_i)_\cM|/2+1}.$$
Now, by the concavity of $\log$,
\begin{eqnarray}
\frac{1}{2}\dE{\by\sim\cU(\Omega)}[\mwidth_F(\by)]-1&=& \dE{\by\sim\cU(\Omega)}\left[\log \left(2^{\mwidth_F(\by)/2-1}\right)\right]\nonumber\\
&\le& \log\left(\dE{\by\sim\cU(\Omega)}\left[ 2^{\mwidth_F(\by)/2-1}\right]\right)\nonumber\\
&= & \log\left(\sum_{i\in [s]} 2^{|(T_i)_\cM|/2-1}\Prr{\by\sim\cU(\Omega)}[\omega(\by)=i]\right)\nonumber\\
&\le & \log\left(\sum_{i\in [s]} 2^{|(T_i)_\cM|/2-1}\Prr{\by\sim\cU(\Omega)}[T_i(\by)=1]\right)\nonumber\\
&\le&\log\left(\sum_{i\in [s]} 2^{|(T_i)_\cM|/2-1}2^{-|(T_i)_\cM|/2+1}\right)\nonumber\\
&=&\log s.\nonumber
\end{eqnarray}
Therefore, $\dE{\by\sim\cU(\Omega)}[\mwidth_F(\by)]\le 4\log s.$
\end{proof}

We are now ready to prove Lemma~\ref{errorA}
\begin{proof}
If $\dist_{\cD_\ell}(F,\Gamma_\ell)>1/4$, then the result follows. Now suppose $\dist_{\cD_\ell}(F,\Gamma_\ell)\le 1/4$. If $s=|F|<2^{\opt(\cS)\ell/16}$, then by Claim~\ref{jkl}, $\dE{\by\sim\cU(\Omega)}[\mwidth_F(\by)]\le 4\log s=\opt(\cS)\ell/4$. Then by Claim~\ref{pop}, $\dist_{\cD_\ell}(F,\Gamma_\ell)\ge1/(8|U|).$  
\end{proof}

The proof of  Theorem~\ref{THEND} is the same as the proof of Theorem~14 in \cite{KochST}. We give the proof for completeness.
\begin{proof}
Consider the constant $\lambda$ in Lemma~\ref{Lin}. Let $c=\lambda/6$.
Suppose there is a PAC learning algorithm $\cA$ for {{\textsc {Monotone $(\log s)$-Junta}}} by size-$s$ {\textsc {DNF}} with $\epsilon=1/(16n)$ that runs in time $n^{c{\log s}}$. We show that there is $k$ such that for
$$k'=\frac{1}{2}\left(\frac{\log N}{\log\log N}\right)^{1/k},$$ $(k,k')$-\SC
can be solved in time $N^{5ck}\le N^{\lambda k}$. By Lemma~\ref{Lin}, the result then follows.

Let $\cS=(S,U,E)$ be an $N$-vertex $(k,k')$-\SC instance where 
$$k=\frac{1}{2}\frac{\log\log N}{\log\log\log N} \mbox{\ and\ } k'=\frac{1}{2}\left(\frac{\log N}{\log\log N}\right)^{1/k}.$$

Consider the following algorithm $\cB$
\begin{enumerate}
\item Input $\cS=(S,U,E)$ an instance for $(k,k')$-\SC.
\item Construct $\Gamma_5$ and $\cD_5$.
\item Run $\cA$ using $\Gamma_5$ and $\cD_5$ with $s=2^{5k}$ and $n=N$.  If it runs more than $N^{5c k}$ steps, then output \No.
\item Let $F$ be the output DNF.
\item If $|F|>s$ then output \No.
\item Estimate $\eta=\dist_{\cD_5}(F,\Gamma_5)$.
\item If $\eta\le \frac{1}{16N}$, output \Yes, otherwise output \No.
\end{enumerate}
The running time of this algorithm is $N^{5c k}\le N^{\lambda k}$. Therefore, it is enough to prove the following
\begin{claim}
Algorithm $\cB$ solves $(k,k')$-\SC.
\end{claim}
\begin{proof}
\Yes case: Let $\cS=(S,U,E)$ be a $(k,k')$-\SC instance and $\opt(\cS)\le k$. Then, $size(\Gamma_5)\le 2^{5\cdot\opt(\cS)}\le 2^{5k}=s$, and by Fact~\ref{oell}, $\Gamma_5$ is \textsc{Monotone $\log s$-Junta}. Therefore, w.h.p., algorithm $\cA$ learns $\Gamma_5$ and outputs a DNF that is $\eta=1/(16N)$ close to the target with respect to $\cD_5$. Since $\cB$ terminates $\cA$ after $N^{5ck}$ time, we only need to prove that $\cA$ runs at most $N^{5c k}$ time.

The running time of $\cA$ is
$$n^{c\log s}=N^{c\log s}\le N^{5ck}.$$

\No Case: Let $\cS=(S,U,E)$ be a $(k,k')$-\SC instance and $\opt(\cS)> k'$. By Lemma~\ref{errorA}, any DNF, $F$, of size $|F|<2^{5k'/16}$ satisfies $\dist_{\cD_5}(F,\Gamma_5)\ge 1/(8|U|)$.
First, we have, for large $N$
$$k'=\frac{1}{2}\left(\frac{\log N}{\log\log N}\right)^{1/k}>32k.$$
Therefore, any DNF, F, of size $|F|<2^{10k}$ satisfies $\dist_{\cD_5}(F,\Gamma_5)\ge 1/(8|U|)$.

We have $2^{10k}>s$. So, $\cB$ either runs more than $N^{5ck}$ steps and then outputs \No in step 3 or outputs a DNF of size more than $s$ and then outputs \No in step~4 or outputs a DNF of size at most $s$ with $\dist_{\cD_5}(F,\Gamma_5)\ge 1/(8|U|)>1/(8N)>1/(16N)$ and outputs \No in step 6.
\end{proof}
\end{proof}

\bibliography{TestingRef}

\appendix

\end{document}